\titleformat*{\section}{\normalsize\bfseries}
\titleformat*{\subsection}{\normalsize\bfseries}
\titleformat*{\subsubsection}{\normalsize\bfseries}
\renewcommand{\@dotsep}{1000}
\def\be#1\ee{\begin{align}#1\end{align}}
\def\nn{\nonumber}
\def\q{\qquad}
\def\f{\frac}
\newcommand{\one}{\mathbbm{1}}
\newcommand{\pr}[1]{\left(#1\right)}
\newcommand{\vr}[1]{\left|#1\right|}
\newcommand{\vvr}[1]{\left\|#1\right\|}
\newcommand{\cur}[1]{\left\{#1\right\}}
\newcommand{\ang}[1]{\left\langle#1\right\rangle}
\newcommand{\half}{\frac{1}{2}}
\newcommand{\ihalf}{\frac{i}{2}}
\newcommand{\pd}{\partial}
\newcommand{\R}{\mathbb{R}}
\newcommand{\Z}{\mathbb{Z}}
\newcommand{\C}{\mathbb{C}}
\newcommand{\X}{\mathbb{X}}
\newcommand{\Y}{\mathbb{Y}}
\newcommand{\K}{\mathbb{K}}
\newcommand{\WL}{{\mathcal{W}_\Lambda}}
\newcommand{\PS}{\mathcal{P}}
\newcommand{\LS}{\mathcal{L}}
\newcommand{\HS}{\mathcal{H}}
\newcommand{\WS}{\mathcal{W}}
\newcommand{\Weyl}[1]{\hat{W}_{#1}}
\newcommand{\tell}{{\tilde{\ell}}}
\newcommand{\tambda}{{\tilde{\lambda}}}
\newtheorem{propo}{Proposition}
\numberwithin{equation}{section}
\newcommand{\putinparen}[1]{\textsuperscript{\!\!\{#1\}}}
\begin{document}

\title{\Large{\textbf{\sffamily Modular polymer representations of the Weyl algebra}}}
\author{\sffamily Yigit Yargic$^{1,2}$\thanks{yyargic@perimeterinstitute.ca} \, \& Marc Geiller$^3$\thanks{marc.geiller@ens-lyon.fr}}
\date{\small{\textit{
$^1$Perimeter Institute for Theoretical Physics,\\ 31 Caroline Street North, Waterloo, Ontario, N2L 2Y5, Canada\\[5pt]
$^2$Department of Physics and Astronomy, University of Waterloo,\\ 200 University Avenue West, Waterloo, Ontario, N2L 3G1, Canada\\[5pt]
$^3$Univ Lyon, ENS de Lyon, Univ Claude Bernard Lyon 1,\\ CNRS, Laboratoire de Physique, UMR 5672, F-69342 Lyon, France\\}}}

\maketitle

\begin{abstract}

One of the key conceptual challenges in quantum gravity is to understand how quantum theory should modify the very notion of spacetime. One way to investigate this question is to study the alternatives to Schr\"odinger quantum mechanics. The polymer representation, inspired by loop quantum gravity, can be understood as capturing features of discrete spatial geometry. The modular representation, on the other hand, has a built-in unification of position and momentum polarizations via a length scale. In this paper, we introduce the modular polymer representations of the Weyl algebra, in which neither position nor momentum exists as a well-defined operator. As inequivalent representations, they are candidates for describing new physics. We illustrate this by studying the dynamics of the harmonic oscillator as an example, with the prospect of eventually applying this representation to quantum cosmology.

\end{abstract}

\thispagestyle{empty}
\newpage
\setcounter{page}{1}

\hrule
\tableofcontents
\vspace{0.7cm}
\hrule


\section{Motivations}

It is reasonable to expect that properly understanding and formulating a theory of quantum gravity will require a rethinking of the concept of spacetime itself. Taking at face value the lesson from Einstein's theory of general relativity, which is that gravity is encoded in the geometry of spacetime, one is naturally led to the idea that a quantum theory of gravity should encode some notion of quantum spacetime geometry, and this in a background-independent manner. This viewpoint is at the core of loop quantum gravity (LQG hereafter) and related discrete approaches, which indeed provide a notion of quantum geometry \cite{Thiemann:2001yy,Ashtekar:2004eh,Oriti:2018dsg}.

Instead of focusing on the quantization of the gravitational field however, one could take a step back and revert the viewpoint to ask how the presence of a (possibly quantum) gravitational field could affect quantum theory itself. This is sometimes referred to as the attempt to ``gravitize the quantum'' \cite{Penrose_2014,Freidel_2014,Howl_2019}. Likewise, it is natural to ask which structures of quantum theory could be modified in order to capture certain features of possible theories of quantum gravity or quantum spacetime, and which freedom there is to do so.

In LQG for example, one is forced to work with a non-separable (kinematical) Hilbert space spanned by spin network states. This provides a representation of the Weyl algebra of exponentiated connection and conjugated flux operators in which only holonomies of the connection exist as operators but not the connection itself \cite{Lewandowski_2006,Fleischhack_2008}. Transposing this verbatim to the case of a finite-dimensional system leads to the so-called polymer representation of quantum mechanics \cite{HALVORSON200445,Strocchi_2016,Ashtekar:2002sn,Corichi:2007aa}. This representation is \textit{inequivalent} to the standard Schr\"odinger representation in that it only admits an exponentiated momentum operator; the momentum operator itself (and its action as a derivative operator) does not exist, as one could indeed expect from the absence of a spatial continuum at the fundamental level. Such a representation is said to be \emph{irregular}, in the sense that it is not weakly continuous in the exponentiation parameter and thus it evades the Stone--von Neumann theorem. When applied to symmetry-reduced models of gravity, describing e.g. an homogeneous and isotropic universe, this polymer quantization leads to loop quantum cosmology (LQC hereafter) \cite{Ashtekar_2011}, a model of quantum cosmology whose key feature is to resolve the Big Bang singularity. In LQC, the key mathematical input inherited from full LQG is the non-standard Hilbert space representation of the finite-dimensional algebra of position and momentum operators. Studying polymer quantum mechanics on finite-dimensional systems is therefore an example of how inputs from quantum gravity can lead to some insights into the structure of quantum theory itself. Furthermore, it is interesting to notice that the polymer representations have found applications in ordinary quantum field theory \cite{10025994495,Acerbi_1993,L_ffelholz_1995,NILL_1992,Hossain:2010eb,Strocchi_2016}, which suggests that they are not simply mathematical curiosities and that they could play a fundamental role in the description of certain physical systems and processes.

Recently, it has also been suggested that any fundamental theory of quantum gravity should posses some controlled notion of non-locality because of the presence of the fundamental Planck length scale \cite{Freidel_2015,Freidel:2016aa,Freidel_2017,Freidel_2019}. This has revived interest in the so-called modular representations of quantum mechanics, inspired by the modular variables of Aharonov and collaborators \cite{Aharonov_1969,2003qpqt.book.....A}. The modular representations also rely on the Weyl algebra of exponentiated position and momentum operators, but at the difference with the polymer representations (which can be irregular in either position or momentum) are regular, and therefore \textit{equivalent} to the usual Schr\"odinger representation. The interesting physical property of the modular representations is that they rely neither strictly on a position nor momentum polarization, but instead on a choice of modular cell in phase space, thereby interpolating between the two usual polarizations. This is possible provided one has access to a length scale. Such modular representations were actually introduced and used in the context of condensed matter physics \cite{Zak:1967aa,Zak:1993aa}, but recent work has suggested that they could be much more fundamental and actually underlie the notion of quantum space \cite{Freidel:2016aa}. This possibility has triggered a new interest and the systematic investigation of these modular representation in quantum mechanics and field theory. The companion paper \cite{Yargic:2020kkb} studies the formulation of a path integral in modular space. Interestingly, the duality between position and momentum highlighted by the modular representation is also the central theme of so-called Born reciprocity \cite{FREIDEL2014302}, whose implementation in the general relativistic setup is a fascinating open question studied in the field of Born geometry \cite{Svoboda:2019fpt}. The modular representation can be understood in a geometrical manner using Born geometry.

The polymer and modular representations, even when considered in the finite-dimensional case, are examples of alternatives to the usual Schr\"odinger representation motivated by some extra inputs: the non-existence of either position or momentum in the polymer case, and the duality between position and momentum in the modular case. A natural question is therefore whether these two ideas can be combined in order to obtain irregular modular representations which treat position and momentum on the same footing. This is what we achieve in this paper. More precisely, we construct a new family of representations of the Weyl algebra, which we call the ``modular polymer (MP) representations''. These MP representations are \textit{inequivalent} to the Schr\"odinger representation and to the $q$-polymer or $p$-polymer representations. They can be thought of as modular representations in which neither the position nor the momentum operators exist. Since our construction is based on the idea of ``polymerizing'' the modular representations, we actually obtain a family of MP representations indexed by a so-called modular lattice, with each such lattice defining a different MP representation of the Weyl algebra. We however exhibit a unitary isomorphism between these representations, thereby providing a uniqueness result for the MP representations.

Having defined the MP representations, we probe their physical consequence by studying the example of the dynamics of a harmonic oscillator. This is the typical simplest system on which the polymer representations have been investigated before \cite{Corichi:2007aa,Corichi:2006qf,G.:2013lia}. Because of the presence of position and momentum operators in the Hamiltonian, working with the $q$-polymer or $p$-polymer representations requires to approximate either operator by their corresponding Weyl operator in the quantum theory. The MP representations require to exponentiate both. We explain how the treatment of the harmonic oscillator in the MP representation requires to introduce two coarse-graining length and momentum scales. The main result coming from this study is the form of the energy spectrum, which is bounded from both below and above. Depending on the ratio of the modular and coarse-graining scales, the spectrum can be either continuous or discrete.

The construction presented in this work opens up the possibility of studying the quantization of gravitational systems in the MP representation. We expect that the natural presence of pairs of (dual) scales in the MP representation will enable us to have not only a notion of fundamental length (as in the LQG-inspired polymer quantization), but also fundamental energy. In the finite-dimensional case, this will lead to an inequivalent representation of quantum cosmology, for example, which will serve as a toy model to understand features of the full theory with infinitely-many degrees of freedom (such as the BF representation of LQG). In this context, the MP representation will potentially shed light onto the implementation of Born reciprocity in quantum gravity.

This paper is organized as follows. In section \ref{sec:2} we recall the definition of the Weyl algebra in one dimension, and introduce the Sch\"odinger, modular, and polymer representations. In section \ref{sec:3} we then construct the new irregular MP representations by ``polymerizing'' the modular representations. In section \ref{sec:4} we establish the unitary equivalence between the various choices of MP representations. Section \ref{sec:5} is then devoted to the study of the dynamics of the harmonic oscillator in the MP representation, in order to highlight the new physical aspects introduced by this latter. Finally, we present perspectives for future work in section \ref{sec:6}.

In order not to interrupt the flow of the paper too often with details of calculations, we gather all the proofs in the appendix, along with other remarks. Whenever this occurs we signal it with a link \{\textcolor{blue}{$n$}\}, which can be clicked to go to the corresponding note in the appendix, and then clicked once again to come back to the main text.

\section{Representations of the Weyl algebra}
\label{sec:2}

For the sake of clarity, we will base our discussion on the simplest possible mechanical system, which is a non-relativistic particle living on the real line $\R$. On the corresponding phase space $\R^2$ we choose local position and momentum coordinates $(q,p)$, with Poisson bracket $\{q,p\}=1$. The first step towards the quantum theory is then to promote this bracket to the Heisenberg commutation relations $[\hat{q},\hat{p}]=i\hbar\hat{\one}$ between position and momentum operators.

Now, given any two real numbers $(a,b)\in\R^2$ with respective dimensions of length and momentum, we define the so-called Weyl operators as the following exponentiated combination of position and momentum operators:
\be
\Weyl{(a,b)} \equiv
e^{i (b \hat{q} - a \hat{p}) / \hbar} .
\ee
The Weyl algebra $\WS$ (sometimes also referred to as the Heisenberg--Weyl algebra) is the non-commutative $C^*$-algebra generated by these Weyl operators, where the involution and the product are given by
\endnote{The product relation follows from the BCH formula. We have
\be
\Weyl{(a,b)} \, \Weyl{(a',b')}
&=  e^{i (b \hat{q} - a \hat{p}) / \hbar
    \,+\, i (b' \hat{q} - a' \hat{p}) / \hbar
    \,-\, \half [b \hat{q} - a \hat{p} ,\,
        b' \hat{q} - a' \hat{p}]/\hbar^2}
\nn\\
&=  e^{\ihalf (ba'-ab') / \hbar} \, \Weyl{(a+a',b+b')}
.
\ee
The exponentiated position and momentum operators often used in the literature are
\be
\hat{U}_{(a)}\equiv\Weyl{(0,a)}=e^{ia\hat{q}/\hbar},
\q\q
\hat{V}_{(a)}\equiv\Weyl{(a,0)}=e^{-ia\hat{p}/\hbar},
\ee
and they obey evidently the relations $\hat{U}_{(a)}^\dagger=\hat{U}_{(-a)}$ and $\hat{V}_{(a)}^\dagger=\hat{V}_{(-a)}$ as well as the product rules
\be
\hat{U}_{(a)}\hat{U}_{(b)}=\hat{U}_{(a+b)},
\q\q
\hat{V}_{(a)}\hat{V}_{(b)}=\hat{V}_{(a+b)},
\q\q
\hat{U}_{(a)}\hat{V}_{(b)}=e^{iab/\hbar}\hat{V}_{(b)}\hat{U}_{(a)}.
\ee}
\be\label{Weyl relations}
\Weyl{(a,b)}^\dagger = \Weyl{(-a,-b)},
\q\q
\Weyl{(a,b)} \, \Weyl{(a',b')}
= e^{\ihalf (ba'-ab') / \hbar} \, \Weyl{(a+a',b+b')}.
\ee
We shall now consider these relations as \textit{defining} the Weyl algebra of abstract operators $\Weyl{(a,b)}$, and thus forget altogether about the Heisenberg commutation relations and the initial reference to position and momentum operators. The task is then to look for representations of this algebra. We will see that this leads to physically interesting (and potentially inequivalent) alternatives to the usual Schr\"odinger representation. These are precisely the modular and polymer representations, together with their modular polymer generalization.


\subsection{Schr\"odinger representation}

The Stone--von Neumann uniqueness theorem \cite{Stone_1930,v_Neumann_1931} (see \cite{Rosenberg,Summers_2001} for more recent and pedagogical treatments) states that any irreducible representation of $\WS$ which is weakly continuous
\endnote{Irreducibility and weak continuity imply that the Hilbert space of the representation must be separable.}
in both $a$ and $b$ is unitarily equivalent to the usual Schr\"odinger representation. In the position polarization for example, the Schr\"odinger representation is defined on the Hilbert space $\HS_\text{S}=L^2(\R,\dd q)$ and wave-functions in this latter are acted on as
\endnote{The state $\psi(q)$ stands for $\ket{\psi} = \int \dd q \, \psi(q) \ket{q}$, where $\psi(q)$ in the second expression is a scalar function and $\braket{q}{p} = (2\pi\hbar)^{-1/2} \, e^{ipq/\hbar}$. Then,
\be
\Weyl{(a,b)} \, \psi(q) &\simeq
\Weyl{(a,b)} \int \dd q \, \psi(q) \ket{q}
\nn \\
&= e^{-\ihalf  a b / \hbar} \,
e^{i b \hat{q} / \hbar} \, e^{- i a \hat{p} / \hbar}
\int \dd q \, \psi(q) \int \dd p
\pr{2\pi\hbar}^{-1/2} e^{-ipq/\hbar} \ket{p}
\nn \\
&= e^{-\ihalf  a b / \hbar} \,
e^{i b \hat{q} / \hbar}
\int \dd q \, \psi(q) \int \dd p
\pr{2\pi\hbar}^{-1/2} e^{-ip(q+a)/\hbar} \ket{p}
\nn \\
&= e^{-\ihalf  a b / \hbar} \,
e^{i b \hat{q} / \hbar}
\int \dd q \, \psi(q) \ket{q+a}
\nn \\
&= e^{-\ihalf  a b / \hbar}
\int \dd q \, \psi(q) \, e^{i b (q+a) / \hbar} \ket{q+a}
\nn \\
&= e^{-\ihalf  a b / \hbar}
\int \dd q \, \psi(q-a) \, e^{i b q / \hbar} \ket{q}
\nn \\
&= \int \dd q \pr{e^{-\ihalf  a b / \hbar} \, e^{i b q / \hbar} \,
\psi(q-a)} \ket{q}
\nn \\
&\simeq e^{-\ihalf  a b / \hbar} \, e^{i b q / \hbar} \, \psi(q-a)
.
\ee
The last expression stands for the state in the second-to-last expression.
}
\be
\Weyl{(a,b)} \, \psi(q) =
e^{-\ihalf  a b / \hbar} \, e^{ibq} \, \psi(q-a),
\ee
which is indeed a unitary action.
\endnote{By virtue of \eqref{Weyl relations} we have that
\be
\Weyl{(a,b)}^\dagger \, \psi(q)
= \Weyl{(-a,-b)} \, \psi(q)
= e^{-\ihalf ab/\hbar} \, e^{-ibq/\hbar} \, \psi(q+a)
,
\ee
and therefore
\be
\Weyl{(a,b)} \Weyl{(a,b)}^\dagger \psi(q)
= e^{-\ihalf ab/\hbar} \, e^{-ibq/\hbar} \, \Weyl{(a,b)} \, \psi(q+a)
= \psi(q) .
\ee}
The condition of weak continuity in $a$ and $b$ ensures that there exist self-adjoint operators $\hat{q}$ and $\hat{p}$ on $\HS_\text{S}$ such that $\Weyl{(a,0)} = e^{-ia\hat{p}/\hbar}$ and $\Weyl{(0,b)} = e^{ib\hat{q}/\hbar}$. In other words, the position and momentum operators themselves are well-defined and act as
\be
\hat{q}\psi(q)=q\psi(q),\q\q\hat{p}\psi(q)=-i\hbar\f{\partial}{\partial q}\psi(q).
\ee
This is the rigid structure on which finite-dimensional quantum mechanics relies. However, as we have argued in the introduction, one may want to take a step back and ask if there are representations which capture essential features of possible theories of quantum gravity or quantum spacetime. Two interesting candidates are the modular and polymer representations. The modular representation, although being unitarily equivalent to the Schrödinger one, has the distinguishing feature of relying neither on a strict position nor momentum polarization, but instead on the notion of modular cell in phase space. The polymer representations, on the other hand, are inequivalent to the Schrödinger ones because they bypass the Stone--von Neumann uniqueness theorem by dropping the requirement of existence of either the position or the momentum operator. Let us now explain more precisely how these representations are built.

\subsection{Modular representations}

The essence of the Schr\"odinger representation introduced above is that it singles out the set of operators which are functions of (say) $\hat{q}$ only, and as such commute and can be diagonalized. It can furthermore seem natural to diagonalize $\hat{q}$ since this latter has a natural classical analogue, which here evidently is the position. However, as famously pointed out by Aharonov and collaborators \cite{Aharonov_1969,2003qpqt.book.....A}, there can be situations in which one is forced to work with exponentiated position or momentum operators of the form $\hat{U}_{(2\pi\hbar/\ell)}=e^{2i\pi\hat{q}/\ell}$ and $\hat{V}_{(\ell)}=e^{-i\ell\hat{p}/\hbar}$, where $\ell$ is a length scale. These are examples of modular operators, after which the representation introduced below is named, and they must be used for example if one wants to measure the relative phase in an interference experiment. Remarkably, although classically these functions of position and momentum cannot commute, at the quantum level the operators commute, and as such can be diagonalized simultaneously.

Having this in mind, it is therefore natural to search for the generic condition for a set of exponentiated position and momentum operators to commute. With the Weyl operators at our disposal, this amounts to looking at commutative *-subalgebras of the Weyl algebra. One can see that these are characterized by the condition
\endnote{This follows from the fact that
\be
\Big[\Weyl{(a,b)},\Weyl{(a',b')}\Big]
= \pr{e^{2\pi i(a'b-ab')/(2\pi\hbar)}-1} \Weyl{(a',b')} \, \Weyl{(a,b)}.
\ee}
\be
\Big[\Weyl{(a,b)},\Weyl{(a',b')}\Big]=0
\q\Leftrightarrow\q
\f{1}{2\pi\hbar}(a'b-ab')\in\Z,
\ee
and are therefore in one-to-one correspondence with so-called \textit{modular lattices}.
\endnote{The solution space forms a lattice because the relation $(a'b-ab')/(2\pi\hbar)\in\Z$ is $\Z$-bilinear in both $(a,b)$ and $(a',b')$.}
More precisely, a modular lattice, denoted by $\Lambda$, is defined by a length scale $\ell$ and the corresponding dual momentum scale $\tell \equiv 2\pi\hbar / \ell$ as
\be
\label{ModularLatticeDef}
\Lambda \equiv
	\cur{(a,b) \in \R^2: a \in \ell \, \Z,
		b \in \tell \, \Z}
\,.
\ee
We will denote elements of a modular lattice by $\K$, and omit the explicit reference to the lattice if no confusion can occur.  Naturally, for $\K,\K' \in \Lambda$ the Weyl operators on the modular lattice $\Lambda$ build a commutative *-subalgebra $\WL$ via the relations
\be
\Weyl{\K}^\dagger = \Weyl{-\K}^{\phantom{\dagger}}
\;, \q\q
\Weyl{\K} \, \Weyl{\K'}
= e^{\ihalf  \omega(\K,\K')/\hbar} \, \Weyl{\K+\K'}
\;,
\ee
where we have written the product with the help of the symplectic structure $\omega$ on the phase space $\R^2$, which is defined as
\be\label{symplectic structure}
\omega(\X,\Y) \equiv \tilde{x} y - x \tilde{y}
\ee
for any $\X = (x,\tilde{x}), \Y = (y,\tilde{y}) \in \R^2$. Note that for elements $\K,\K' \in \Lambda$ in a modular lattice we have $e^{\ihalf  \omega(\K,\K')/\hbar} \in \{ +1, -1 \}$.

With this definition, we can now also define the \textit{modular space} (or \textit{modular cell}) $T_\Lambda$ as a torus of size $\ell \times \tell$ which is dual to the modular lattice $\Lambda$, i.e.
\begin{align}
T_\Lambda \equiv \R^2 / \Lambda
\;.
\end{align}
One can see that this modular space has twice the dimension of the polarization space which we would obtain in the Schr\"odinger representation by diagonalizing $\hat{q}$ (or $\hat{p}$ in the momentum polarization). The modular construction therefore admits two singular limits. In the limit $\ell \rightarrow 0$, the modular lattice becomes the position space and the modular space becomes the momentum space. In the limit $\ell \rightarrow \infty$, vice versa. Both cases are degenerate as $\Lambda$ and $T_\Lambda$ change their topology. In addition, there are two trivial cases given by $a\in\R$, $b = 0$ and $a = 0$, $b\in\R$, which correspond to the limits $\ell \rightarrow \infty$ and $\ell \rightarrow 0$, and give rise to the Schr\"odinger representation and its dual, respectively.

We now briefly review the \textit{modular representations} of the Weyl algebra. Given a modular lattice $\Lambda$ and an element $\X = (x,\tilde{x}) \in \R^2$, we define a \textit{modular vector} $\ket{\X}_\Lambda$ in terms of the position eigenstates as
\endnote{It is also possible to write a modular vector similarly as a superposition of momentum eigenstates, such that
\begin{align*}
\ket{\X}_\Lambda
=	\ell^{-1/2} \,
	e^{- i x \tilde{x} / \hbar} \,
	\sum_{\tilde{n} \in \Z}
	e^{- i \tilde{n} \tell x / \hbar}
	\, \vert \tilde{x} + \tilde{n} \tell \rangle
\;.
\end{align*}
Proof:
\begin{align*}
\ket{\X}_\Lambda
&=	\tell^{-1/2}
	\sum_{n\in\Z} e^{i  n \ell  \tilde{x}/\hbar} \, \one
	\ket{x +  n \ell }
\displaybreak[0] \\
&=	\tell^{-1/2}
	\int_\R \dd p 
	\sum_{n\in\Z} e^{i  n \ell  \tilde{x}/\hbar}
	\ket{p}	\braket{p}{x +  n \ell }
\displaybreak[0] \\
&=	\pr{2\pi \hbar / \ell}^{-1/2}
	\int_\R \dd p 
	\sum_{n\in\Z} e^{i  n \ell  \tilde{x}/\hbar}
	\ket{p} \frac{1}{\sqrt{2\pi\hbar}} \,
	e^{- i p (x +  n \ell )/\hbar}
\displaybreak[0] \\
&=	\frac{\sqrt{\ell}}{2\pi\hbar}
	\int_\R \dd p \;
	e^{-ipx/\hbar} \ket{p}
	\sum_{n\in\Z} e^{i  n \ell  (\tilde{x} - p)/\hbar}
\displaybreak[0] \\
&=	\frac{\sqrt{\ell}}{2\pi\hbar}
	\int_\R \dd p \;
	e^{-ipx/\hbar} \ket{p}
	\sum_{\tilde{n}\in\Z}
	\pr{2\pi\hbar / \ell}
	\delta\big(\tilde{x} - p + \tilde{n}\pr{2\pi\hbar / \ell}\big)
\displaybreak[0] \\
&=	\ell^{-1/2} \,
	e^{- i x \tilde{x} / \hbar} \,
	\sum_{\tilde{n} \in \Z}
	e^{- i \tilde{n} \tell x / \hbar}
	\, \vert \tilde{x} + \tilde{n} \tell \rangle
\end{align*}
}
\begin{align}
\ket{\X}_\Lambda &\equiv
	\tell^{-1/2}
	\sum_{n\in\Z} e^{i  n \ell  \tilde{x}/\hbar}
	\ket{x +  n \ell }
\;.
\end{align}
This is known as the Zak transform of the position eigenstates \cite{Zak:1967aa,Zak:1993aa}. Form this definition, one can see that the modular vectors are quasi-periodic. Indeed, for any $\K = (k,\tilde{k}) \in \Lambda$ they satisfy
\endnote{Proof:
\begin{align*}
\ket{\X + \K}_\Lambda
&=	\tell^{-1/2}
	\sum_{n\in\Z}
	e^{i  n \ell  (\tilde{x} + \tilde{k})/\hbar}
	\ket{x + k +  n \ell }
\displaybreak[0] \\
&=	\tell^{-1/2}
	\sum_{n\in\Z}
	e^{i  n \ell  \tilde{x}/\hbar}
	\ket{x + k +  n \ell }
\displaybreak[0] \\
&=	\tell^{-1/2}
	\sum_{n\in\Z}
	e^{i ( n \ell  - k) \tilde{x}/\hbar}
	\ket{x +  n \ell }
\displaybreak[0] \\
&=	e^{-ik\tilde{x}/\hbar} \ket{\X}_\Lambda
\end{align*}
}
\begin{align}
\label{QuasiPerModVec}
\ket{\X + \K}_\Lambda
&=	e^{-i k \tilde{x}/\hbar} \ket{\X}_\Lambda
\;.
\end{align}
Furthermore, the modular vectors diagonalize the commutative subalgebra $\WL$, whose operators act as
\begin{align}
\Weyl{\K} \ket{\X}_\Lambda
&=	e^{\ihalf  k \tilde{k}/\hbar} \,
    e^{i (\tilde{k} x - k \tilde{x})/\hbar}
	\ket{\X}_\Lambda
\;.
\end{align}
This property follows from the action of a Weyl operator for a general $(a,b) \in \R^2$, which is given by
\endnote{Proof:
\begin{align*}
\Weyl{(a,b)} \ket{\X}_\Lambda
&=	e^{-\ihalf  a b / \hbar} \,
    e^{i b \hat{q} / \hbar} \, e^{- i a \hat{p} / \hbar} \,
	\ell^{-1/2}
	\sum_{\tilde{n} \in \Z}
	e^{- i (\tilde{x} + \tilde{n} \tell) x / \hbar}
	\, \vert \tilde{x} + \tilde{n} \tell \rangle
\displaybreak[0] \\
&=	e^{-\ihalf  a b / \hbar} \, e^{i b \hat{q} / \hbar} \,
	\ell^{-1/2}
	\sum_{\tilde{n} \in \Z}
	e^{- i (\tilde{x} + \tilde{n} \tell) (x + a) / \hbar}
	\, \vert \tilde{x} + \tilde{n} \tell \rangle
\displaybreak[0] \\
&=	e^{-\ihalf  a b / \hbar} \, e^{i b \hat{q} / \hbar}
	\ket{\X + (a,0)}_\Lambda
\displaybreak[0] \\
&=	e^{-\ihalf  a b / \hbar} \, e^{i b \hat{q} / \hbar} \,
	\tell^{-1/2}
	\sum_{n\in\Z} e^{i  n \ell  \tilde{x}/\hbar}
	\ket{x + a +  n \ell }
\displaybreak[0] \\
&=	e^{-\ihalf  a b / \hbar} \,
	\tell^{-1/2}
	\sum_{n\in\Z} e^{i  n \ell  \tilde{x}/\hbar} \,
	e^{i b (x + a +  n \ell ) / \hbar}
	\ket{x + a +  n \ell }
\displaybreak[0] \\
&=	e^{\ihalf  a b / \hbar} \, e^{i b x / \hbar} \,
	\tell^{-1/2}
	\sum_{n\in\Z} e^{i  n \ell  (\tilde{x} + b)/\hbar}
	\ket{x + a +  n \ell }
\displaybreak[0] \\
&=	e^{\ihalf  a b / \hbar} \, e^{i b x / \hbar}
    \ket{\X + (a,b)}_\Lambda
\;.
\end{align*}
The previous identity, $\Weyl{\K} \ket{\X}_\Lambda = e^{\ihalf  k \tilde{k}/\hbar} \, e^{i (\tilde{k} x - k \tilde{x})/\hbar} \ket{\X}_\Lambda$, follows from this more general identity, together with the quasi-periodicity relation.
}
\begin{align}
\label{WeylOnModVec}
\Weyl{(a,b)} \ket{\X}_\Lambda
&=	e^{\ihalf  ab/\hbar} \,
    e^{i b x /\hbar} \ket{\X + (a,b)}_\Lambda
\;.
\end{align}
The set $\cur{\ket{\X} |\ \X \in T_\Lambda}$ is complete, meaning that the modular vectors provide a decomposition of the identity of the form
\endnote{
Since the combination $\ketbra{\X}{\X}_\Lambda$ is periodic, we can evaluate the right-hand side of this decomposition of the identity on an arbitrary modular cell of size $\ell \times \tell$. We find
\begin{align*}
\int_{T_\Lambda} \dd^2\X \ketbra{\X}{\X}_\Lambda
&=	\int_0^\ell \dd x \int_0^{\tell} \dd \tilde{x}
	\ketbra{\X}{\X}_\Lambda
\displaybreak[0] \\
&=	\tell^{-1}
	\int_0^\ell \dd x \int_0^{\tell} \dd \tilde{x}
	\sum_{n\in\Z} \sum_{n'\in\Z}
	e^{i  n \ell  \tilde{x}/\hbar} \,
	e^{-i  n \ell ' \tilde{x}/\hbar}
	\ketbra{x +  n \ell }{x +  n \ell '}
\displaybreak[0] \\
&=	\tell^{-1}
	\int_0^\ell \dd x \int_0^{\tell} \dd \tilde{x}
	\sum_{n\in\Z} \sum_{m\in\Z}
	e^{i \ell m \tilde{x}/\hbar}
	\ketbra{x +  n \ell }{x + (n-m)\ell}
\displaybreak[0] \\
&=	\tell^{-1}
	\int_0^\ell \dd x
	\sum_{n\in\Z} \sum_{m\in\Z}
	\ketbra{x +  n \ell }{x + (n-m)\ell}
	\int_0^{\tell} \dd \tilde{x} \;
	e^{i \ell m \tilde{x}/\hbar}
\displaybreak[0] \\
&=	\tell^{-1}
	\int_0^\ell \dd x
	\sum_{n\in\Z} \sum_{m\in\Z}
	\ketbra{x +  n \ell }{x + (n-m)\ell}
	\tell \, \delta_{m,0}
\displaybreak[0] \\
&=	\int_0^\ell \dd x \sum_{n\in\Z}
	\ketbra{x +  n \ell }{x +  n \ell }
\displaybreak[0] \\
&=	\int_\R \dd x \ketbra{x}{x}
=	\one
\;.
\end{align*}
}
\be
\one = \int_{T_\Lambda} \dd^2\X \ketbra{\X}{\X}_\Lambda
\;.
\ee
Furthermore, the modular vectors are orthonormal in the inner product given by
\endnote{Proof:
\begin{align*}
\braket{\X'}{\X}_\Lambda
&=	\tell^{-1}
	\sum_{n'\in\Z} \sum_{n\in\Z}
	e^{-i  n \ell ' \tilde{x}'/\hbar} \,
	e^{i  n \ell  \tilde{x}/\hbar}
	\braket{x' +  n \ell '}{x +  n \ell }
\displaybreak[0] \\
&=	\tell^{-1}
	\sum_{n'\in\Z} \sum_{n\in\Z}
	e^{-i  n \ell ' \tilde{x}'/\hbar} \,
	e^{i  n \ell  \tilde{x}/\hbar} \,
	\delta\big(x' - x + (n' - n)\ell\big)
\displaybreak[0] \\
&=	\tell^{-1}
	\sum_{n'\in\Z} \sum_{n\in\Z}
	e^{-i  n \ell ' \tilde{x}'/\hbar} \,
	e^{i \ell (n' - n) \tilde{x}/\hbar} \,
	\delta(x' - x +  n \ell )
\displaybreak[0] \\
&=	\tell^{-1}
	\sum_{n\in\Z}
	e^{- i  n \ell  \tilde{x}/\hbar} \,
	\delta(x' - x +  n \ell )
	\sum_{n'\in\Z}
	e^{-i  n \ell ' (\tilde{x}' - \tilde{x}) / \hbar}
\displaybreak[0] \\
&=	\tell^{-1}
	\sum_{n\in\Z}
	e^{- i  n \ell  \tilde{x}/\hbar} \,
	\delta(x' - x +  n \ell )
	\sum_{\tilde{n}\in\Z} \tell \,
	\delta(\tilde{x}' - \tilde{x} + \tilde{n} \tell)
\displaybreak[0] \\
&=	\sum_{\K \in \Lambda} e^{-ik\tilde{x}/\hbar} \,
	\delta^2(\X' - \X + \K)
\;.
\end{align*}
The set $\cur{\ket{\X} |\ \X \in T_\Lambda}$ is thus ``orthonormal'' in the sense that $\braket{\X'}{\X}_\Lambda = \delta^2(\X' - \X)$ for $\X,\X' \in T_\Lambda$.}
\be
\braket{\X'}{\X}_\Lambda
=	\sum_{\K \in \Lambda} e^{-ik\tilde{x}/\hbar} \,
	\delta^2(\X' - \X + \K)
\;.
\ee
Finally, we have that the Hilbert space $\mathcal{H}_\Lambda$ consists of vectors of the form
\begin{align}
\ket{\psi}_\Lambda
&=	\int_{T_\Lambda} \dd^2\X \;
	\psi(\X) \ket{\X}_\Lambda
\end{align}
which satisfy two conditions: 1) square-integrability, and 2) that $\psi(\X) \ket{\X}_\Lambda$ is periodic. Note that the periodicity of the integrand is necessary for the integrals to be well-defined on the torus $T_\Lambda$. Hence, the ``modular wave functions'' $\psi(\X)$ satisfy for all $\X \in \R^2$ and $\K \in \Lambda$ the quasi-periodicity relation
\begin{align}
\psi(\X + \K) = e^{i k \tilde{x} / \hbar} \, \psi(\X)
\;.
\end{align}
In other words, a ``modular wave function'' $\psi$ is an $L^2$-section of the $U(1)$-bundle $E_\Lambda \rightarrow T_\Lambda$ over the modular space, which is defined by the identification
\begin{align}
\label{Twist}
(\theta,\X) \sim
\big(\theta \, e^{-ik\tilde{x}/\hbar}, \X + \K\big)
\end{align}
for any $\X = (x,\tilde{x}) \in \R^2$, $\K = (k,\tilde{k}) \in \Lambda$ and $\theta \in U(1)$. We can therefore simply write the Hilbert space for the modular representation as $\mathcal{H}_\Lambda = L^2(E_\Lambda)$.

Having introduced the modular representations, it is now important to point out that they are regular, in the sense that the map $(a,b) \mapsto \Weyl{(a,b)}$ is weakly continuous. By the Stone--von Neumann theorem, this implies that the modular representations (labelled by a choice of modular lattice) are unitarily equivalent to the Schr\"odinger representation and also to each other.

Finally, using this regularity assumption, it is interesting to note that the position and momentum operators are represented in a modular representation as $\hat{q} \sim q + i\hbar \, \pd_p$ and $\hat{p} \sim - i \hbar \, \pd_q$, and in particular that the position operator acquires a shift.
\endnote{Proof:
\begin{align*}
\hat{q} \ket{\psi}_\Lambda
&=	\hat{q} \int_{T_\Lambda} \dd^2\X \;
	\psi(\X) \ket{\X}_\Lambda
\displaybreak[0] \\
&=	\hat{q} \int_{T_\Lambda} \dd^2\X \;
	\psi(\X) \, \tell^{-1/2}
	\sum_{n\in\Z} e^{i  n \ell  \tilde{x}/\hbar}
	\ket{x +  n \ell }
\displaybreak[0] \\
&=	\int_{T_\Lambda} \dd^2\X \;
	\psi(\X) \, \tell^{-1/2}
	\sum_{n\in\Z} e^{i  n \ell  \tilde{x}/\hbar}
	\pr{x +  n \ell } \ket{x +  n \ell }
\displaybreak[0] \\
&=	\int_{T_\Lambda} \dd^2\X \;
	\psi(\X) \, \tell^{-1/2}
	\pr{x - i \hbar \, \pd_{\tilde{x}}}
	\sum_{n\in\Z} e^{i  n \ell  \tilde{x}/\hbar}
	\ket{x +  n \ell }
\displaybreak[0] \\
&=	\int_{T_\Lambda} \dd^2\X
	\pr{x \, \psi(\X) + i\hbar \, \pd_{\tilde{x}} \psi(\X)}
	\ket{\X}_\Lambda
\end{align*}
and
\begin{align*}
\hat{p} \ket{\psi}_\Lambda
&=	\hat{p} \int_{T_\Lambda} \dd^2\X \;
	\psi(\X) \ket{\X}_\Lambda
\displaybreak[0] \\
&=	\hat{p} \int_{T_\Lambda} \dd^2\X \;
	\psi(\X) \, \ell^{-1/2}
	\sum_{\tilde{n} \in \Z}
	e^{- i (\tilde{x} + \tilde{n} \tell) x / \hbar}
	\, \vert \tilde{x} + \tilde{n} \tell \rangle
\displaybreak[0] \\
&=	\int_{T_\Lambda} \dd^2\X \;
	\psi(\X) \, \ell^{-1/2}
	\sum_{\tilde{n} \in \Z}
	e^{- i (\tilde{x} + \tilde{n} \tell) x / \hbar}
	\, (\tilde{x} + \tilde{n} \tell)
	\, \vert \tilde{x} + \tilde{n} \tell \rangle
\displaybreak[0] \\
&=	\int_{T_\Lambda} \dd^2\X \;
	\psi(\X) \, \ell^{-1/2}
	\pr{i\hbar \, \pd_x} \sum_{\tilde{n} \in \Z}
	e^{- i (\tilde{x} + \tilde{n} \tell) x / \hbar}
	\, \vert \tilde{x} + \tilde{n} \tell \rangle
\displaybreak[0] \\
&=	\int_{T_\Lambda} \dd^2\X
	\pr{-i\hbar \, \pd_x \psi(\X)}
	\ket{\X}_\Lambda
\;.
\end{align*}
We also show that the operators $\hat{q}$ and $\hat{p}$ preserve the section condition \eqref{Twist}: For any $\K = (k,\tilde{k}) \in \Lambda$, we have
\begin{align*}
(\hat{q}\psi)(\X + \K)
&=	(x + k) \, \psi(\X + \K)
	+ i\hbar \, \pd_{\tilde{x}} \psi(\X + \K)
\\
&=	(x + k) \, e^{ik\tilde{x}/\hbar} \, \psi(\X)
	+ i\hbar \, \pd_{\tilde{x}}
		\pr{e^{ik\tilde{x}/\hbar} \, \psi(\X)}
=	e^{ik\tilde{x}/\hbar} \, (\hat{q}\psi)(\X)
\end{align*}
and
\begin{align*}
(\hat{p}\psi)(\X + \K)
=	-i\hbar \, \pd_x \psi(\X + \K)
=	-i\hbar \, \pd_x
	\pr{e^{ik\tilde{x}/\hbar} \, \psi(\X)}
=	e^{ik\tilde{x}/\hbar} \, (\hat{p}\psi)(\X)
\;.
\end{align*}}

\subsection{Polymer representations}

Inequivalent representations to the Schr\"odinger (or more generally the modular) one can be obtained by relaxing one or several of the assumptions of the Stone--von Neumann uniqueness theorem. The polymer representations are obtained by relaxing the condition of regularity, i.e. of weak continuity of the Weyl operators $\Weyl{(a,b)}$ in either $a$ or $b$ \cite{Strocchi_2016}. More precisely, relaxing the weak continuity in $a$ produces the $p$-polymer representation, while relaxing the weak continuity in $b$ results in the $q$-polymer representation \cite{Corichi:2007aa}. These two polymer representations are inequivalent to each other and to the Schr\"odinger representation. In addition, for each inequivalent polymer representation one has the freedom of working with the equivalent position or momentum polarizations.

For the sake of definiteness, let us focus on the $p$-polymer representation in the position polarization. This is the polymer representation inspired by LQG \cite{Ashtekar:2002sn}. The lack of weak continuity implies that there exists no self-adjoint operator $\hat{p}$ such that $\Weyl{(a,0)} = e^{-ia\hat{p}/\hbar}$. The non-existence of the operator $\hat{p}=-i\hbar\partial/\partial q$ is what one could naturally expect from theories of quantum gravity such as LQG, where the notion of continuum space breaks down (in the sense at least that it is not fundamental but rather emergent).

In this polymer representation, the Weyl operators act in the same way as in the Schr\"odinger representation, i.e.
\be
\Weyl{(a,b)} \, f(q) =
e^{-\ihalf  a b / \hbar} \, e^{ibq} \, f(q-a).
\ee
The Hilbert space $\cal{H}_\text{P}$, however, consists of functions $f(q)$ on $\R$ which vanish away from a countable subset $S_f$, and are square summable in the sense
\be
\sum_{q\in S_f}|f(q)|^2<\infty,
\ee
with an inner product given by
\be
\ang{f,g} = \sum_{q \in S_f \cap S_g} f(q)^* g(q).
\ee
This non-separable Hilbert space is also sometimes denoted by $\mathcal{H}_\text{P}=L^2(\R_\text{d},\dd\mu_\text{d})$, where $\R_\text{d}$ is the real line equipped with the discrete topology, and $\dd\mu_\text{d}$ is the associated discrete measure. In this polymer representation, another important feature is that the position operator actually possesses a complete set of \textit{normalizable} eigenvectors $\cur{\varphi_{q} |\ q \in \R}$ such that
\be
\Weyl{(0,b)} \varphi_{q}=e^{ibq}\varphi_q,\q\q\Weyl{(a,0)} \varphi_{q}=\varphi_{q+a}, \q\q \ang{\varphi_{q},\varphi_{q'}}=\delta_{q,q'},
\ee
where the right-hand side is the Kronecker delta. From this, it is indeed straightforward to see that $\Weyl{(0,b)}$ is weakly continuous, so that there exists a self-adjoint operator $\hat{q}$ such that $\hat{q}\varphi_q=q\varphi_q$. However, we have
\be
\lim_{a\rightarrow0} \ang{\varphi_{q},\Weyl{(a,0)}\varphi_{q}}=0,
\ee
whereas $\Weyl{(0,0)}=1$ and $\ang{\varphi_q,\varphi_q}=1$. This means precisely that $\Weyl{(a,0)}$ fails to be weakly continuous in $a$, and therefore that $\hat{p}$ itself does not exist.

This polymer representation can also be obtained from a Gel’fand--Naimark--Segal construction using a positive linear functional on $\WS$ \cite{Ashtekar:2002sn}. Furthermore, notice that we have presented here a polymer representation which is irregular in $a$ and written in the position polarization. However, we can also change polarization, and also for both polarizations consider the polymer representation which is irregular in $b$ instead. These possibilities are presented and studied at length in \cite{Corichi:2007aa}, together with the corresponding GNS constructions. Finally, let us point out that in \cite{Cavallaro1999} the authors have given a Stone--von Neumann uniqueness theorem for the irregular polymer representations.

Now that we have reviewed the known Schr\"odinger, modular, and polymer representations of the Weyl algebra, we turn to the new result of this work and introduce the modular polymer representations.

\section{Modular polymer representations}
\label{sec:3}

In this section, we ``polymerize'' the modular representation and obtain a new family of irregular representations of the Weyl algebra $\mathcal{W}$, which we call the ``modular polymer (MP) representations''. From now on, we consider the Weyl algebra as an abstract object without reference to the position and momentum operators.

Let $\Lambda$ be a modular lattice as defined in \eqref{ModularLatticeDef}. We denote by $\mathcal{H}^\Lambda_\mathrm{MP} = l^2(E_\Lambda)$ the non-separable Hilbert space of square-summable sections of the $U(1)$-bundle $E_\Lambda \rightarrow T_\Lambda$. This means that each element $f \in \mathcal{H}^\Lambda_\mathrm{MP}$ is supported on a countable subset $S_f$ of $T_\Lambda$ and satisfies $\vvr{f} = \sum_{\X \in S_f} \vr{f(\X)}^2 < \infty$. Moreover, the domain of each element $f \in \mathcal{H}^\Lambda_\mathrm{MP}$ can be extended from $T_\Lambda$ to $\R^2$ by the section condition \eqref{Twist}, i.e.~such that $f(\X+\K) = e^{ik\tilde{x}/\hbar} f(\X)$ for any $\X = (x,\tilde{x}) \in \R^2$ and $\K = (k,\tilde{k}) \in \Lambda$. On the Hilbert space $\mathcal{H}^\Lambda_\mathrm{MP}$ we consider the inner product
\begin{align}
\ang{f,g} = \sum_{\X \in S_f \cap S_g} f(\X)^* g(\X)
\;.
\end{align}
For each $\X \in \R^2$, we then define the function $\varphi^{(\Lambda)}_\X : \R^2 \rightarrow \C$ as
\begin{align}
\label{MP-BasisVectors}
\varphi^{(\Lambda)}_\X(\Y) =
\begin{cases}
e^{i(y-x)\tilde{x}/\hbar} \;,
&\text{if } \Y-\X \in \Lambda
\;, \\
0 \;, &\text{otherwise}
\;,
\end{cases}
\end{align}
for any $\Y = (y,\tilde{y}) \in \R^2$. We will drop the superscript $(\Lambda)$ from $\varphi^{(\Lambda)}_\X$ unless there is an ambiguity about which modular lattice we refer to. Note that the functions $\varphi_\X$ are supported on a single point $\X \in T_\Lambda$ in the modular space. Moreover, they satisfy $\varphi_\X(\Y + \K) = e^{ik\tilde{y}/\hbar} \, \varphi_\X(\Y)$ for any $\K \in \Lambda$.
\endnote{We show $\varphi_\X(\Y + \K) = e^{ik\tilde{y}/\hbar} \, \varphi_\X(\Y)$ as follows:
\begin{align*}
\varphi_\X(\Y + \K)
&=	\begin{cases}
	e^{i(y+k-x)\tilde{x}/\hbar} \;,
	&\text{if } \Y+\K-\X \in\Lambda
	\;, \\
	0 \;, &\text{otherwise}
	\;,
	\end{cases}
\displaybreak[0] \\
&=	\begin{cases}
	e^{ik\tilde{x}/\hbar} \,
	e^{i(y-x)\tilde{x}/\hbar} \;,
	&\text{if } \Y-\X \in\Lambda
	\;, \\
	0 \;, &\text{otherwise}
	\;,
	\end{cases}
\displaybreak[0] \\
&=	\begin{cases}
	e^{ik\tilde{y}/\hbar} \,
	e^{i(y-x)\tilde{x}/\hbar} \;,
	&\text{if } \Y-\X \in\Lambda
	\;, \\
	0 \;, &\text{otherwise}
	\;,
	\end{cases}
\displaybreak[0] \\
&=	e^{ik\tilde{y}/\hbar} \, \varphi_\X(\Y)
\;.
\end{align*}}
\endnote{Another useful property of the functions $\varphi_\X$ is that for any $(a,b) \in \R^2$,
\begin{align}
\label{useful1}
\varphi_\X(\Y+(a,b))
=	e^{i (y+a-x) b/\hbar} \,
	\varphi_{\X-(a,b)}(\Y)
\;.
\end{align}
Proof:
\begin{align*}
\varphi_\X(\Y+(a,b))
&=	\begin{cases}
	e^{i(y+a-x)\tilde{x}/\hbar} \;,
	&\text{if } \Y + (a,b) - \X \in\Lambda
	\;, \\
	0 \;, &\text{otherwise}
	\;,
	\end{cases}
\displaybreak[0] \\
&=	\begin{cases}
	e^{i(y+a-x)b/\hbar} \,
	e^{i(y-(x-a))(\tilde{x}-b)/\hbar} \;,
	&\text{if } \Y - \pr{\X - (a,b)} \in\Lambda
	\;, \\
	0 \;, &\text{otherwise}
	\;,
	\end{cases}
\displaybreak[0] \\
&=	e^{i(y+a-x)b/\hbar} \, \varphi_{\X-(a,b)}(\Y)
\;.
\end{align*}}
Hence, we have $\varphi_\X \in \mathcal{H}^\Lambda_\mathrm{MP}$ for every $\X \in \R^2$. Finally, for any $\K = (k,\tilde{k}) \in \Lambda$, we also have the identity
\endnote{Proof:
\begin{align*}
\varphi_{\X + \K}(\Y)
&=	\begin{cases}
	e^{i(y-x-k)(\tilde{x} + \tilde{k})/\hbar} \;,
	&\text{if } \Y - \pr{\X+\K} \in\Lambda
	\;, \\
	0 \;, &\text{otherwise}
	\;,
	\end{cases}
\displaybreak[0] \\
&=	\begin{cases}
	e^{i(y-x-k)\tilde{k}/\hbar} \,
	e^{-ik\tilde{x}/\hbar} \,
	e^{i(y-x)\tilde{x}/\hbar} \;,
	&\text{if } \Y - \X \in\Lambda
	\;, \\
	0 \;, &\text{otherwise}
	\;,
	\end{cases}
\displaybreak[0] \\
&=	e^{-ik\tilde{x}/\hbar} \, \varphi_{\X}(\Y)
\;.
\end{align*}}
\begin{align}
\label{QuasiPerVarphi}
\varphi_{\X + \K} = e^{-ik\tilde{x}/\hbar} \, \varphi_\X
\;.
\end{align}
We therefore conclude that the set $\cur{\varphi_\X : \X \in T_\Lambda}$ is an orthonormal basis for $\mathcal{H}^\Lambda_\mathrm{MP}$.

It is instructive to compare the functions $\varphi_\X^{(\Lambda)} \in \mathcal{H}^\Lambda_\mathrm{MP}$ to the modular vectors $\ket{\X}_\Lambda$. For example, one can see that identity \eqref{QuasiPerVarphi} is analogous to the relation \eqref{QuasiPerModVec}. In what follows, our strategy will be to push this analogy forward. For each $(a,b) \in \R^2$, we define the action of $\Weyl{(a,b)}$ on the set $\cur{\varphi_\X : \X \in \R^2}$ by
\begin{align}
\label{WeylOnVarphi}
\Weyl{(a,b)} \, \varphi_\X
&=	e^{\ihalf  ab/\hbar} \,
    e^{i b x /\hbar} \, \varphi_{\X + (a,b)}
\;.
\end{align}
The motivation for this definition comes from the equation \eqref{WeylOnModVec}. A straightforward calculation shows that
\endnote{Proof:
\begin{align*}
\Weyl{(a,b)} \Weyl{(a',b')} \, \varphi_\X
&=	e^{\ihalf  a' b'/\hbar} \, e^{i b' x /\hbar} \,
    \Weyl{(a,b)} \, \varphi_{\X + (a',b')}
\displaybreak[0] \\
&=	e^{\ihalf  a' b'/\hbar} \, e^{i b' x /\hbar} \,
    e^{\ihalf  a b/\hbar} \, e^{i b (x+a') /\hbar} \,
    \varphi_{\X + (a,b) + (a',b')}
\displaybreak[0] \\
&=	e^{\ihalf  (b a' - a b') /\hbar} \,
    e^{\ihalf  (a+a') (b+b')/\hbar} \,
    e^{i (b+b') x /\hbar} \,
    \varphi_{\X + (a+a',b+b')}
\displaybreak[0] \\
&=	e^{\ihalf  (b a' - a b') /\hbar} \,
    \Weyl{(a+a',b+b')} \, \varphi_{\X}
\;.
\end{align*}}
\be
\Weyl{(a,b)} \Weyl{(a',b')} \, \varphi_\X
&=	e^{\ihalf  (b a' - a b') /\hbar} \,
    \Weyl{(a+a',b+b')} \, \varphi_{\X}
\;,
\ee
and
\endnote{Proof:
\begin{align*}
\ang{\Weyl{(a,b)} \, \varphi_\X, \varphi_{\X'}}
&=	\sum_\Y \pr{(\Weyl{(a,b)} \, \varphi_\X)(\Y)}^*
	\varphi_{\X'}(\Y)
\displaybreak[0] \\
&=	\sum_\Y \pr{
    e^{\ihalf  a b/\hbar} \, e^{i b x /\hbar} \,
    \varphi_{\X + (a,b)}(\Y)}^*
	\varphi_{\X'}(\Y)
\displaybreak[0] \\
&\hspace{-5.5pt}\overset{\eqref{useful1}}{=}
    \sum_\Y \pr{
    e^{- \ihalf  a b/\hbar} \, e^{i b y/\hbar} \,
    \varphi_{\X}(\Y - (a,b))}^*
	\varphi_{\X'}(\Y)
\displaybreak[0] \\
&=  \sum_\Y \pr{
    e^{\ihalf  a b/\hbar} \, e^{i b y/\hbar} \,
    \varphi_{\X}(\Y)}^*
	\varphi_{\X'}(\Y + (a,b))
\displaybreak[0] \\
&=  \sum_\Y \varphi_{\X}(\Y)^* \,
    e^{- \ihalf  a b/\hbar} \, e^{-i b y/\hbar} \,
	\varphi_{\X'}(\Y + (a,b))
\displaybreak[0] \\
&\hspace{-5.5pt}\overset{\eqref{useful1}}{=}
    \sum_\Y \varphi_{\X}(\Y)^* \, e^{\ihalf  a b/\hbar} \,
	e^{- i x' b/\hbar} \, \varphi_{\X'-(a,b)}(\Y)
\displaybreak[0] \\
&=  \sum_\Y \varphi_{\X}(\Y)^* \, \Weyl{(-a,-b)} \, \varphi_{\X'}(\Y)
\displaybreak[0] \\
&=  \ang{\varphi_\X, \Weyl{(-a,-b)} \, \varphi_{\X'}}
	\;.
\end{align*}}
\be
\ang{\Weyl{(a,b)} \, \varphi_\X, \varphi_\Y}
&=	\ang{\varphi_\X, \Weyl{(-a,-b)} \, \varphi_\Y}
\;.
\ee
Hence, we confirmed that the action of the Weyl operators on $\mathcal{H}^\Lambda_\mathrm{MP}$ as defined in \eqref{WeylOnVarphi} builds a representation of the Weyl algebra. We note that this representation diagonalizes the subalgebra $\WL$, since for any $\K \in \Lambda$,
\begin{align}
\Weyl{\K} \, \varphi_\X
&=	e^{\ihalf  k \tilde{k} / \hbar} \,
    e^{i(\tilde{k} x - k \tilde{x})/\hbar} \, \varphi_\X
\;.
\end{align}
Finally, from
\begin{align}
\ang{\varphi_\X, \Weyl{(a,b)} \, \varphi_\X} =
\begin{cases}
0 \;, &\text{when } (a,b) \notin \Lambda \;, \\
e^{\ihalf  a b / \hbar} \,
e^{i(b x - a \tilde{x})/\hbar}
\;, &\text{when } (a,b) \in \Lambda \;,
\end{cases}
\end{align}
it is clear that $(a,b) \mapsto \Weyl{(a,b)}$ is not weakly continuous. Therefore, there are no self-adjoint operators $\hat{q}$ and $\hat{p}$ such that $\Weyl{(a,b)} = e^{i (b \hat{q} - a \hat{p}) / \hbar}$ for all $(a,b) \in \R^2$.

\section{Generalization and equivalence}
\label{sec:4}

We have so far introduced the MP representations on the example of a 1-dimensional configuration space. In this section, we present a generalization of the construction to arbitrary $d$ dimensions.

For this, we first generalize the symplectic structure introduced in \eqref{symplectic structure} to the $d$-dimensional case by writing it as $\omega(\X,\Y) = \tilde{x} \cdot y - x \cdot \tilde{y}$, and also introduce a metric $\eta(\X,\Y) = \tilde{x} \cdot y + x \cdot \tilde{y}$ on the phase space $\PS=\R^{2d}$. The symplectic structure and the metric endow $\PS$ with a \emph{para-Hermitian structure} \cite{Freidel:2017yuv}. The projectors $ \pr{1 \pm \eta^{-1} \omega}/2$ can be used to distinguish the position and momentum subspaces of $\PS$. The symmetry group that preserves both $\omega$ and $\eta$ is $\mathrm{Sp}(2d) \cap \mathrm{O}(d,d) = \mathrm{GL}(d)$.

We define a modular lattice $\Lambda \subset \PS$ as a maximal discrete set that satisfies $\omega(\Lambda,\Lambda) = 2\pi\hbar \, \Z$. The definition \eqref{MP-BasisVectors} of the MP basis vectors is generalized in $d$ dimensions to
\begin{align}
\varphi^{(\Lambda)}_\X(\Y) =
\begin{cases}
e^{\ihalf (\eta + \omega)(\X,\Y-\X) /\hbar} \;,
&\text{if } \Y-\X \in \Lambda
\;, \\
0 \;, &\text{otherwise}
\;,
\end{cases}
\end{align}
for any $\X,\Y \in \PS$. These functions satisfy
\begin{align}
\label{Gen-WeylOnVarphiMulti}
    \Weyl{\Y} \, \varphi_\X &=
    e^{\frac{i}{4} \eta(\Y,\Y)/\hbar} \,
    e^{\ihalf (\eta + \omega)(\Y,\X) /\hbar} \,
    \varphi_{\X + \Y}
\end{align}
and
\begin{align}
    \varphi_{\X + \K} &=
    e^{\ihalf (\omega - \eta)(\K,\X)/\hbar} \,
    \varphi_\X
\end{align}
for any $\X,\Y \in \PS$, $\K \in \Lambda$.

Since each modular lattice defines a different MP representation of the Weyl algebra, it is natural to ask whether these representations are unitarily equivalent to each other. We now show that this is indeed the case. This question is also tied to the fact that a modular lattice does not actually break the isotropy of space (in spite of being a lattice).

\begin{propo}
Let $\Lambda$ and $\Lambda'$ be two modular lattices, which are related to each other by a matrix $M \in \mathrm{GL}(2d)$ such that $\Lambda' = M \Lambda$. The MP representations based on $\Lambda$ and $\Lambda'$ are unitarily equivalent if and only if $M^T \omega M = \omega$ and $M^T \eta M = \eta$, i.e.~$M \in \mathrm{Sp}(2d) \cap \mathrm{O}(d,d) = \mathrm{GL}(d)$.
\end{propo}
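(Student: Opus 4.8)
The plan is to exhibit the equivalence explicitly through the geometric map that $M$ induces on modular space, and then to read off the two structure-preservation conditions from its two consistency requirements. Since $\Lambda'=M\Lambda$, the matrix $M$ descends to a bijection $T_\Lambda\to T_{\Lambda'}$ of the two modular cells, so the natural candidate intertwiner is the linear map $U_M$ fixed on basis vectors by $U_M\,\varphi^{(\Lambda)}_\X=\varphi^{(\Lambda')}_{M\X}$. First I would check that $U_M$ is well defined, i.e.\ compatible with the quasi-periodicity relation $\varphi_{\X+\K}=e^{\ihalf(\omega-\eta)(\K,\X)/\hbar}\varphi_\X$. Applying $U_M$ to $\varphi^{(\Lambda)}_{\X+\K}$ and using $M\K\in\Lambda'$ produces the phase $e^{\ihalf(\omega-\eta)(M\K,M\X)/\hbar}$, which must equal $e^{\ihalf(\omega-\eta)(\K,\X)/\hbar}$. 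Since this has to hold for all $\X\in\PS$ and the exponent is linear in $\X$, continuity forces the identity $(\omega-\eta)(M\K,M\X)=(\omega-\eta)(\K,\X)$, and as $\Lambda$ spans $\PS$ this is $M^T(\omega-\eta)M=\omega-\eta$. Granting this, $U_M$ carries the orthonormal basis $\{\varphi^{(\Lambda)}_\X\}_{\X\in T_\Lambda}$ bijectively onto $\{\varphi^{(\Lambda')}_{\X'}\}_{\X'\in T_{\Lambda'}}$ and is therefore unitary.

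The second step is the intertwining property. Using \eqref{Gen-WeylOnVarphiMulti} on both sides, I would compute $U_M\,\Weyl{\Y}\,\varphi^{(\Lambda)}_\X=e^{\frac{i}{4}\eta(\Y,\Y)/\hbar}\,e^{\ihalf(\eta+\omega)(\Y,\X)/\hbar}\,\varphi^{(\Lambda')}_{M\X+M\Y}$ and $\Weyl{M\Y}\,U_M\,\varphi^{(\Lambda)}_\X=e^{\frac{i}{4}\eta(M\Y,M\Y)/\hbar}\,e^{\ihalf(\eta+\omega)(M\Y,M\X)/\hbar}\,\varphi^{(\Lambda')}_{M\X+M\Y}$, and compare the phases. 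They agree for all $\X,\Y$ exactly when $(\eta+\omega)(M\Y,M\X)=(\eta+\omega)(\Y,\X)$, i.e.\ $M^T(\omega+\eta)M=\omega+\eta$ (which in particular forces $\eta(M\Y,M\Y)=\eta(\Y,\Y)$). Combining the two requirements, $U_M\,\Weyl{\Y}\,U_M^{-1}=\Weyl{M\Y}$ holds if and only if $M$ preserves both $\omega-\eta$ and $\omega+\eta$, equivalently both $\omega$ and $\eta$; this is precisely $M^T\omega M=\omega$, $M^T\eta M=\eta$, i.e.\ $M\in\mathrm{Sp}(2d)\cap\mathrm{O}(d,d)=\mathrm{GL}(d)$. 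This construction gives the ``if'' direction outright and shows that the two structure-preservation conditions are exactly what the natural intertwiner needs, so they are necessary for $U_M$ itself to exist.

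The delicate part is the converse against an \emph{arbitrary} unitary, i.e.\ excluding that some non-geometric intertwiner could relate the two representations when $M\notin\mathrm{GL}(d)$. Here I would use that each MP representation is irreducible with cyclic vector $\varphi^{(\Lambda)}_0$, whose vacuum functional $\Weyl{\Y}\mapsto\ang{\varphi^{(\Lambda)}_0,\Weyl{\Y}\,\varphi^{(\Lambda)}_0}$ is supported exactly on $\Lambda$ and equals $e^{\frac{i}{4}\eta(\Y,\Y)/\hbar}$ there. That $M$ must already lie in $\mathrm{Sp}(2d)$ is forced by demanding that $\Weyl{\Y}\mapsto\Weyl{M\Y}$ respect the Weyl product \eqref{Weyl relations}. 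For the remaining $\mathrm{O}(d,d)$ condition, any unitary intertwiner must send $\varphi^{(\Lambda)}_0$ to a joint eigenvector of the commuting family $\{\Weyl{\K}:\K\in\Lambda'\}$; since these act diagonally in the $\varphi^{(\Lambda')}_\W$ basis with point-separating characters, such an eigenvector is a single basis vector $\varphi^{(\Lambda')}_\W$, and one then matches its $\eta$-dependent eigenvalues against the transported vacuum phases.

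I expect this last matching to be the main obstacle. The eigenvalue comparison is \emph{quadratic} in the lattice variable (through $\eta(M^{-1}\K,M^{-1}\K)$) while the freedom in the choice of $\W$ enters only \emph{linearly} (through $\omega(\K,\cdot)$), so polarizing the $\eta$-phase along $\K,\K'\in\Lambda$ naively yields only a congruence of the type $\eta(\K,\K')-\eta(M\K,M\K')\in 4\pi\hbar\,\Z$ rather than the exact identity $M^T\eta M=\eta$. Upgrading this congruence to the sharp statement—using the maximality of the modular lattices, the nondegeneracy of $\omega$, and the rigidity of the point-separating characters—is the technical heart of the converse, and is where I would concentrate the careful work; the cleanest route is to show that the equivalence, when it exists, must coincide up to a phase with the geometric map $U_M$, thereby reducing the general case to the two phase-compatibility computations above.
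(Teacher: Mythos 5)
Your first two paragraphs are correct and reproduce the paper's own proof almost verbatim: the paper defines the same candidate intertwiner $\mathcal{U}_M\varphi^{(\Lambda)}_\X=\varphi^{(\Lambda')}_{M\X}$, verifies unitarity, and checks that $\mathcal{U}_M\Weyl{\Y}\mathcal{U}_M^\dagger=\Weyl{M\Y}$ can be postulated precisely when the phase $\frac{1}{4}\eta(\Y,\Y)+\half(\eta+\omega)(\Y,\X)$ is $M$-invariant, which forces $M^T\omega M=\omega$ and $M^T\eta M=\eta$ exactly (not merely modulo $2\pi\hbar\,\Z$) because the phase mismatch is a continuous function of $\X,\Y$ ranging over all of $\PS$, valued in a discrete set, and vanishing at the origin. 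Your extra check that $\mathcal{U}_M$ is compatible with the quasi-periodicity $\varphi_{\X+\K}=e^{\ihalf(\omega-\eta)(\K,\X)/\hbar}\varphi_\X$, yielding $M^T(\omega-\eta)M=\omega-\eta$, is left implicit in the paper and is a welcome point of care; combined with the intertwining condition $M^T(\omega+\eta)M=\omega+\eta$ it gives the stated conclusion.

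Where you diverge is the ``only if'' direction. The paper's converse is much more modest than what you attempt: it only shows that the relation $\mathcal{U}_M\Weyl{\Y}\mathcal{U}_M^\dagger=\Weyl{M\Y}$ cannot be consistently postulated for the \emph{geometric} map $\mathcal{U}_M$ when $\omega$ or $\eta$ is not preserved, and stops there; it does not exclude a non-geometric unitary equivalence. So the obstacle you correctly identify in your last paragraph --- that once the comparison is restricted to the discrete set $\Lambda'$ (via the joint eigenvalues of the commuting family $\{\Weyl{\K}:\K\in\Lambda'\}$ on the image of the cyclic vector) the continuity argument is unavailable and polarization yields only a congruence, not the identity $M^T\eta M=\eta$ --- is a genuine gap, but it is a gap relative to the literal statement of the proposition, not relative to the paper's own argument, which simply does not address arbitrary intertwiners. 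Your sketch (irreducibility, the vacuum functional supported on $\Lambda$, and the fact that a joint eigenvector of the point-separating characters must be a single basis vector $\varphi^{(\Lambda')}_{\W}$ up to phase, reducing everything to the geometric case) is the right strategy for the stronger claim, but as you acknowledge it is not carried to completion here.
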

\begin{proof}
Let $M$ be such that $M^T \omega M = \omega$ and $M^T \eta M = \eta$. We define a map $\mathcal{U}_M : \mathcal{H}^\Lambda_\mathrm{MP} \rightarrow \mathcal{H}^{\Lambda'}_\mathrm{MP}$ by
\begin{align}
\mathcal{U}_M \varphi_\X^{(\Lambda)}
=	\varphi_{M\X}^{(\Lambda')}
\;.
\end{align}
This map is unitary because it has an inverse given by $\mathcal{U}_M^{-1}: \mathcal{H}^{\Lambda'}_\mathrm{MP} \rightarrow \mathcal{H}^\Lambda_\mathrm{MP}$, $\varphi_\X^{(\Lambda')} \mapsto \varphi_{M^{-1}\X}^{(\Lambda)}$, and it preserves the inner product,
\begin{align}
\ang{\varphi_{M\X}^{(\Lambda')}, \varphi_{M\Y}^{(\Lambda')}}
	_{\mathcal{H}^{\Lambda'}_\mathrm{MP}}
=	\ang{\mathcal{U}_M \varphi_\X^{(\Lambda)},
		\mathcal{U}_M \varphi_\Y^{(\Lambda)}}
	_{\mathcal{H}^{\Lambda'}_\mathrm{MP}}
=	\ang{\varphi_\X^{(\Lambda)}, \varphi_\Y^{(\Lambda)}}
	_{\mathcal{H}^\Lambda_\mathrm{MP}}
\;.
\end{align}
We would like the Weyl operators to transform under $\mathcal{U}_M$ as
\begin{align}
\label{Gen-WeylOpsTransform}
\mathcal{U}_M \Weyl{\X} \, \mathcal{U}_M^\dagger
	&=  \Weyl{M\X}
\;,
\end{align}
for any $\X \in \PS$, where $\Weyl{\X}$ is considered as a bounded operator on $\mathcal{H}^\Lambda_\mathrm{MP}$ and $\Weyl{M\X}$ is considered as a bounded operator on $\mathcal{H}^{\Lambda'}_\mathrm{MP}$. One can check that
\begin{align}
    \Weyl{M \Y} \, \varphi_{M\X}^{(\Lambda')} &=
    e^{\frac{i}{4} \eta(M\Y,M\Y)/\hbar} \,
    e^{\ihalf (\eta + \omega)(M\Y,M\X) /\hbar} \,
	\varphi_{M\X + M\Y}^{(\Lambda')}
	\nonumber \\
	&=  e^{\frac{i}{4} \eta(\Y,\Y)/\hbar} \,
        e^{\ihalf (\eta + \omega)(\Y,\X) /\hbar} \,
	    \mathcal{U}_M \varphi_{\X + \Y}^{(\Lambda)}
	\nonumber \\
	&=  \mathcal{U}_M \Weyl{\Y} \, \varphi_{\X}^{(\Lambda)}
\end{align}
agrees with \eqref{Gen-WeylOpsTransform}. Therefore, \eqref{Gen-WeylOpsTransform} is consistent and $\mathcal{U}_M$ is a unitary isomorphism.

For the second part of the proof, let $\omega' \equiv M^T \omega M$ or $\eta' \equiv M^T \eta M$. We will check whether \eqref{Gen-WeylOpsTransform} can be postulated when $\omega' \neq \omega$ or $\eta' \neq \eta$. Using the same definitions as before, we find
\begin{align}
    \Weyl{M \Y} \, \varphi_{M\X}^{(\Lambda')}
	&=  e^{\frac{i}{4} \eta'(\Y,\Y)/\hbar} \,
        e^{\ihalf (\eta' + \omega')(\Y,\X) /\hbar} \,
	    \mathcal{U}_M \varphi_{\X + \Y}^{(\Lambda)}
	\nonumber \\
	&=  e^{\frac{i}{4} \eta'(\Y,\Y)/\hbar} \,
        e^{\ihalf (\eta' + \omega')(\Y,\X) /\hbar} \,
	    e^{-\frac{i}{4} \eta(\Y,\Y)/\hbar} \,
        e^{-\ihalf (\eta + \omega)(\Y,\X) /\hbar} \,
	    \mathcal{U}_M \Weyl{\Y} \, \varphi_{\X}^{(\Lambda)}
	\;.
\end{align}
For this to be consistent with \eqref{Gen-WeylOpsTransform}, we need
\begin{align}
    \frac{1}{4} \eta'(\Y,\Y)
    + \half (\eta' + \omega')(\Y,\X)
	- \frac{1}{4} \eta(\Y,\Y)
    - \half (\eta + \omega)(\Y,\X)
    \in 2\pi\hbar \, \Z
    \;, \quad \forall \X,\Y \in \PS
    \;.
\end{align}
This is not possible unless $\omega' = \omega$ and $\eta' = \eta$. Without being able to postulate \eqref{Gen-WeylOpsTransform}, $\mathcal{U}_M$ cannot be a unitary isomorphism between $\mathcal{H}^\Lambda_\mathrm{MP}$ and $\mathcal{H}^{\Lambda'}_\mathrm{MP}$.
\end{proof}

We have therefore shown that the MP representations built on different modular lattices are all unitarily equivalent to each other. Picking a given modular lattice for the construction therefore does not break the rotational symmetry of the space. This was pointed out already in the modular case and explained at length in \cite{Freidel:2016aa}. The reason is essentially that the modular lattice is defined on phase space, which is a non-commutative space.

\section{Dynamics of the harmonic oscillator}
\label{sec:5}

A natural next step is now to investigate the physical consequences of our construction. For this, we would like to consider the dynamics of a quantum harmonic oscillator with Hamiltonian operator
\begin{align}
\hat{H}
=	\frac{1}{2m} \, \hat{p}^2
	+ \half \, m \, \omega^2 \, \hat{q}^2
\end{align}
in an MP representation. However, the operators $\hat{p}$ and $\hat{q}$ (let alone their squares) do not exist in an MP representation. Therefore, we have to construct approximants using the Weyl operators.

The usual procedure in the literature on polymer quantization begins with choosing a ``coarse-graining scale''. For the MP representation we need two scales: a length scale $\lambda$ to approximate $\hat{p}$ and a momentum scale $\tambda$ to approximate $\hat{q}$. In fact, we already have a pair of scales $\ell$ and $\tell = 2\pi\hbar/\ell$ naturally available in an MP representation, but it will turn out to be more natural and general to consider new coarse-graining scales.
\endnote{There are multiple reasons for this. First, the modified operators $\hat{q}$ and $\hat{p}$ at the scale $\ell$ and $\tell$ will belong to the subalgebra $\WL$ and therefore commute, which is undesirable. Second, we want to be able to take the limits $\lambda \rightarrow 0$ and $\tambda \rightarrow 0$ independently, but the scales $\ell$ and $\tell$ are constrained to satisfy $\ell \tell = 2\pi \hbar$. Finally, for the sake of generality, we would like to consider an arbitrary scale pair $(\lambda,\tambda)$, which can be taken at the end of the calculations to be equal to $(\ell,\tell)$ if it turns out to be desirable.}

Let us therefore introduce an arbitrary length scale $\lambda$ and an arbitrary momentum scale $\tambda$, in addition to the ``modular scales'' $(\ell,\tell)$. Next, we introduce a lattice $\LS_{\X_0}^{\Lambda,(\lambda,\tambda)} \subset T_\Lambda$ in the modular space for any $\X_0 \in \R^2$ by
\begin{align}
\label{LatticeDef}
\LS_{\X_0}^{\Lambda,(\lambda,\tambda)} \equiv
\cur{\X \in \R^2 :
	\exists \, n,\tilde{n} \in \Z,
	\exists \, \K \in \Lambda :
	\X = \X_0 + (n \lambda, \tilde{n} \tambda) + \K}
	/ \Lambda
\;.
\end{align}
There are then three cases to consider:
\endnote{Let's visualize the definition \eqref{LatticeDef} with an example. We consider a single direction for simplicity.

Case 1 - irrational scale ratio: Let $\ell = 1$ and $\lambda = 1/\sqrt{2}$. Then, the lattice points have coordinates of the form $x_n = x_0 + n/\sqrt{2} \,\, (\mathrm{mod} \, 1)$ for $n\in\Z$. This is an infinite lattice on a compact space (w.r.t.~continuum topology).

Case 2 - rational scale ratio: Let $\ell = 1$ and $\lambda = 2/5$. Then, we get a finite lattice at the points $x_n = x_0 + n/5 \,\, (\mathrm{mod} \, 1)$ for $n = 0,...,4$. This is a finite lattice.}
\begin{enumerate}
\item
Both $\lambda / \ell$ and $\tambda / \tell$ are \textbf{irrational} numbers. In this case, $\LS_{\X_0}^{\Lambda,(\lambda,\tambda)}$ is an infinite set, which is dense in $T_\Lambda$ with respect to the continuum topology.
\item
Both $\lambda / \ell$ and $\tambda / \tell$ are \textbf{rational} numbers. In this case, $\LS_{\X_0}^{\Lambda,(\lambda,\tambda)}$ is a finite set.
\item
A combination of the previous two cases.
\end{enumerate}
These three cases have different physical consequences and we will discuss them separately.

In each case, we consider the elements $\psi \in \mathcal{H}^\Lambda_\mathrm{MP}$ of the MP Hilbert space which are supported on the lattice $\LS_{\X_0}^{\Lambda,(\lambda,\tambda)} \subset T_\Lambda$. These elements belong to a \textit{separable} Hilbert space $\mathcal{H}^{\Lambda, (\lambda,\tambda), \X_0}_\mathrm{MP}$, which is a \textit{superselection sector} of the full MP Hilbert space. The MP Hilbert space can be written as a direct sum
\begin{align}
\mathcal{H}^\Lambda_\mathrm{MP}
&=	\bigoplus_{\X_0 \,\in\,
		T_\Lambda / \LS_{0}^{\Lambda,(\lambda,\tambda)}}
	\mathcal{H}^{\Lambda, (\lambda,\tambda), \X_0}
		_\mathrm{MP}
\end{align}
over the superselection sectors labeled by $\X_0=(x_0,\tilde{x}_0)$. Finally, we define the operators $\widehat{q^2}$ and $\widehat{p^2}$ on each superselection sector by
\endnote{These definitions are motivated by the Taylor expansion of trigonometric functions. For example,
\begin{align*}
\frac{\hbar^2}{\lambda^2}
	\pr{2 - \Weyl{(\lambda,0)} - \Weyl{(-\lambda,0)}}
\text{``\!}&=\text{\!''\,}
	\frac{\hbar^2}{\lambda^2}
	\pr{2 - e^{-i\lambda \hat{p}/\hbar}
		- e^{i\lambda \hat{p}/\hbar}}
\\
&=	\frac{2\hbar^2}{\lambda^2}
	\pr{1 - \cos\pr{\lambda \hat{p}/\hbar}}
\\
&=	\widehat{p^2} + \mathcal{O}((\lambda p/\hbar)^2)
\;.
\end{align*}
We used the symbol $\text{``\!}=\text{\!''}$ to indicate that the corresponding equation is only a formal writing, since the operator $\hat{p}$ does not exist. We also infer from this calculation that the approximation is valid only in the regime $p \ll \hbar / \lambda$.}
\begin{subequations}
\begin{align}
\widehat{q^2_\tambda\,} &\equiv
	\frac{\hbar^2}{\tambda^2}
	\pr{2 - \Weyl{(0,\tambda)} - \Weyl{(0,-\tambda)}}
\;, \\
\widehat{p^2_\lambda} &\equiv
	\frac{\hbar^2}{\lambda^2}
	\pr{2 - \Weyl{(\lambda,0)} - \Weyl{(-\lambda,0)}}
\;.
\end{align}
\end{subequations}
These definitions are based on approximations that are valid in the regimes $q \ll \hbar / \tambda$ and $p \ll \hbar / \lambda$, but we will consider them as fundamental definitions at all scales. Note that the operators $\widehat{q^2_\tambda\,}$ and $\widehat{p^2_\lambda}$ map each superselection sector onto itself. Finally, let us also point out that here we are considering the regularized operators corresponding to the square of position and momentum, and not the operators squared such as $\hat{q}^2$ and $\hat{p}^2$. While this latter choice also leads to a well-defined regularization of the Hamiltonian, we have made the choice which is usually followed in the literature on polymer quantization \cite{Ashtekar:2002sn,G.:2013lia}.

Using the redefined position and momentum operators, we obtain a regularized Hamiltonian of the form
\begin{align}
\label{HamRef}
\hat{H}_{(\lambda,\tambda)} &\equiv
	\frac{\hbar^2}{2m \lambda^2}
	\pr{2 - \Weyl{(\lambda,0)} - \Weyl{(-\lambda,0)}}
	+ \frac{m \omega^2 \hbar^2}{2 \tambda^2}
	\pr{2 - \Weyl{(0,\tambda)} - \Weyl{(0,-\tambda)}}
\;.
\end{align}
In the following, we will make the additional assumption
\begin{align}
\label{AssumptionLambda}
    \lambda \tambda \in 2\pi\hbar \, \Z
    \;,
\end{align}
and analyze the spectrum of the Hamiltonian \eqref{HamRef} in the first two cases of scale ratios. We will motivate this assumption in the following subsection by investigating the solutions under a power-law ansatz, but the generic case $\lambda \tambda \notin 2\pi\hbar \, \Z$ remains open.

Notice also that the third case mentioned above (i.e. the combination of rational and irrational ratios) does not arise under this assumption. Finally, because $\lambda$ and $\tambda$ are related by \eqref{AssumptionLambda}, in particular $\lambda \tambda \geq 2\pi \hbar$, it is not possible to take the limits $\lambda \rightarrow 0$ and $\tambda \rightarrow 0$ simultaneously under this assumption. This issue disappears in the classical limit, where $\hbar\rightarrow0$.

\subsection{Irrational scale ratios}

Let's start with the case when both $\lambda / \ell$ and $\tambda / \tell$ are irrational numbers. An element $\psi \in \mathcal{H}^{\Lambda, (\lambda,\tambda), \X_0}_\mathrm{MP}$ of the superselection sector can be written uniquely
\endnote{We shall prove the uniqueness of the expression \eqref{PsiInA}. Let's define a lattice $\bar{\LS}_{\X_0}^{(\lambda,\tambda)} \subset \R^2$ for any $\X_0 \in \R^2$ as
\begin{align*}
\bar{\LS}_{\X_0}^{(\lambda,\tambda)} \equiv
	\cur{\Y \in \R^2 :
	\exists \, n,\tilde{n} \in \Z :
	\Y = \X_0 + (n \lambda, \tilde{n} \tambda)}
\;.
\end{align*}
Assuming that $\lambda/\ell$ and $\tambda/\tell$ are irrational numbers, there is a bijective mapping between the lattices $\bar{\LS}_{\X_0}^{(\lambda,\tambda)} \subset \R^2$ and $\LS_{\X_0}^{\Lambda,(\lambda,\tambda)} \subset T_\Lambda$ that is given by
\begin{align*}
\Pi : \bar{\LS}_{\X_0}^{(\lambda,\tambda)}
	&\rightarrow \LS_{\X_0}^{\Lambda,(\lambda,\tambda)} \\
\Y &\mapsto \Y \,\, (\mathrm{mod} \, \Lambda)
\;.
\end{align*}
Moreover, the basis elements satisfy the relation \eqref{QuasiPerVarphi}, i.e.~the states $\varphi_\Y$ and $\varphi_{\Pi(\Y)}$ are linearly dependent for any $\Y \in \R^2$. Since $\cur{\varphi_{\Y} : \Y \in \LS_{\X_0}^{\Lambda,(\lambda,\tambda)}}$ is a basis of the superselection sector $\mathcal{H}^{\Lambda, (\lambda,\tambda), \X_0}_\mathrm{MP}$, the set $\cur{\varphi_\Y : \Y \in \bar{\LS}_{\X_0}^{(\lambda,\tambda)}}$ is also a basis. Hence, the expression \eqref{PsiInA} is simply a decomposition of a state in this basis, which is unique.}
as
\begin{align}
\label{PsiInA}
\psi
&=	\sum_{n,\tilde{n} \in \Z}
	A_{n,\tilde{n}} \,
	\varphi_{\X_0 + (n\lambda,\tilde{n}\tambda)}
\;,
\end{align}
where $A_{n,\tilde{n}}$ are complex numbers. From the action of the Hamilton operator \eqref{HamRef} on this state, we get
\begin{align}
\hat{H}_{(\lambda,\tambda)} \, \psi
&=	\sum_{n,\tilde{n} \in \Z} \bigg(
		\frac{\hbar^2}{2m \lambda^2}
		\pr{2 A_{n,\tilde{n}}
			- A_{n-1,\tilde{n}} - A_{n+1,\tilde{n}}
			}
	\nonumber \\ & \hspace{0.5cm}
		+ \frac{m \omega^2 \hbar^2}{2 \tambda^2}
		\pr{2 A_{n,\tilde{n}}
			- e^{i \tambda (x_0 + n \lambda) / \hbar} \,
			A_{n,\tilde{n}-1}
			- e^{-i \tambda (x_0 + n \lambda) / \hbar} \,
			A_{n,\tilde{n}+1}
			}
		\!\! \bigg)
		\, \varphi_{\X_0 + (n\lambda,\tilde{n}\tambda)}\;,
\end{align}
where we used the equation \eqref{WeylOnVarphi}, and shifted the summation variables $n$ and $\tilde{n}$. Hence, if $\psi$ is an eigenvector of the Hamilton operator $\hat{H}_{(\lambda,\tambda)}$ with eigenvalue $E_{(\lambda,\tambda)}$, the coefficients $A_{n,\tilde{n}}$ are required to satisfy the linear recurrence relation
\begin{align}
\label{RecursionEqA}
E_{(\lambda,\tambda)} \, A_{n,\tilde{n}}
&=	\frac{\hbar^2}{2m \lambda^2}
	\pr{2 A_{n,\tilde{n}}
		- A_{n-1,\tilde{n}} - A_{n+1,\tilde{n}}
		}
	\nonumber \\ & \hspace{0.5cm}
	+ \frac{m \omega^2 \hbar^2}{2 \tambda^2}
	\pr{2 A_{n,\tilde{n}}
		- e^{i \tambda (x_0 + n \lambda) / \hbar} \,
		A_{n,\tilde{n}-1}
		- e^{-i \tambda (x_0 + n \lambda) / \hbar} \,
		A_{n,\tilde{n}+1}
		}
\;.
\end{align}
Before we analyze the spectrum of the Hamiltonian under the assumption \eqref{AssumptionLambda}, we are going to motivate this assumption by examining a power-law ansatz. Consider
\begin{align}
\label{SolHam-Irr-PowerLaw}
    A_{n,\tilde{n}} = a^n \, b^{\tilde{n}} \, c^{n \tilde{n}}
    \;,
\end{align}
where $a,b,c \in \C$ are three complex numbers that are independent of $n$ and $\tilde{n}$. If we substitute this ansatz in the recurrence relation \eqref{RecursionEqA}, we get
\begin{align}
\label{SolHam-Irr-Rec-PowerLaw}
    E_{(\lambda,\tambda)} &=
    \frac{\hbar^2}{2m \lambda^2} \pr{
        2 - a \, c^{\tilde{n}} - \frac{1}{a \, c^{\tilde{n}}} }
	\nonumber \\ & \hspace{0.5cm}
    + \frac{m \omega^2 \hbar^2}{2 \tambda^2} \pr{ 2
        - \pr{b \, e^{-ix_0\tambda/\hbar}}
            \pr{c \, e^{-i \lambda \tambda / \hbar}}^n
        - \pr{b \, e^{-ix_0\tambda/\hbar}}^{-1}
            \pr{c \, e^{-i \lambda \tambda / \hbar}}^{-n} }
    \;.
\end{align}
The left-hand side of \eqref{SolHam-Irr-Rec-PowerLaw} is independent of $n$ and $\tilde{n}$, thus the right-hand side must also be independent of these variables. This requires
\begin{align}
\label{SolHam-Irr-FirstFind}
    c = 1
    \qquad \text{and} \qquad
    \frac{\lambda \tambda}{2 \pi \hbar} \in \Z
    \;.
\end{align}
Hence, the solutions of the recurrence relation \eqref{RecursionEqA} that follow the power-law ansatz \eqref{SolHam-Irr-PowerLaw} exist only under the conditions \eqref{SolHam-Irr-FirstFind}. This motivates us to consider the condition $\lambda \tambda \in 2\pi\hbar \, \Z$ in general without the power-law ansatz.

Let's multiply both sides of the equation \eqref{RecursionEqA} with a factor of $e^{-i(n\tilde{r} - \tilde{n}r)}$ for arbitrary $r,\tilde{r} \in \R$, and sum the resulting expression over all $n,\tilde{n} \in \Z$. Assuming that the sum
\begin{align}
\label{PolymericFourier-Irr}
    \phi(r,\tilde{r}) \equiv
    \sum_{n,\tilde{n} \in \Z}
    e^{-i(n\tilde{r} - \tilde{n}r)} \,
    A_{n,\tilde{n}}
\end{align}
has a finite value, we obtain
\begin{align}
\label{Fake_PDE}
    E_{(\lambda,\tambda)} \, \phi(r,\tilde{r})
&=	\frac{\hbar^2}{m \lambda^2}
	\pr{1 - \cos\tilde{r}}
	\phi(r,\tilde{r})
	+ \frac{m \omega^2 \hbar^2}{\tambda^2}
	\pr{1 - \cos\!\pr{r + \tambda x_0 / \hbar}}
	\phi(r,\tilde{r})
\;.
\end{align}
One can notice that this is not a differential equation, unlike in the case of standard polymer quantization \cite{Ashtekar:2002sn,G.:2013lia}. The factor $\phi(r,\tilde{r})$ therefore simply drops from this equation and we find
\begin{align}
\label{BeautifulEnergy}
    E_{(\lambda,\tambda)}
&=	\frac{\hbar^2}{m \lambda^2}
	\pr{1 - \cos\tilde{r}}
	+ \frac{m \omega^2 \hbar^2}{\tambda^2}
	\pr{1 - \cos\!\pr{r + \tambda x_0 / \hbar}}
\;.
\end{align}
Hence, the spectrum of the Hamiltonian \eqref{HamRef} is continuous, and it is bounded from both below and above, such that
\begin{align}
    0 \leq E_{(\lambda,\tambda)} \leq
    \frac{2\hbar^2}{m \lambda^2}
    + \frac{2m \omega^2 \hbar^2}{\tambda^2}
    \;.
\end{align}
The upper bound diverges as $\lambda \rightarrow 0$ or $\tambda \rightarrow 0$.

As the state $\phi(r,\tilde{r})$ cancelled out from the equation \eqref{Fake_PDE}, we found the energy spectrum of the system without finding the states that realize these energy eigenvalues. In fact, the solution
\begin{align}
\label{SolHam-Irr-PlaneWave}
    \psi =
    \sum_{n,\tilde{n} \in \Z}
    e^{i (n \tilde{r} - \tilde{n} r)} \,
    \varphi_{\X_0+(n\lambda,\tilde{n}\tambda)}
\end{align}
for arbitrary $r,\tilde{r} \in \R$ would be an eigenstate of the Hamiltonian \eqref{HamRef} with the energy eigenvalue given in \eqref{BeautifulEnergy}. However, the state in \eqref{SolHam-Irr-PlaneWave} is not normalizable, and therefore it is not an element of the MP Hilbert space.

This situation is familiar from Schr\"odinger quantum mechanics, where the plane waves
\begin{align}
    \ket{\Psi_k} =
    \int_{\R} \dd x \; e^{-ikx} \ket{x}
    \;, \quad k \in \R
    \;,
\end{align}
are not in $L^2(\R)$, nevertheless they serve as a useful tool to build normalizable states. One can treat the solutions in \eqref{SolHam-Irr-PlaneWave} analogously to the plane waves. In this analogy, the local MP basis vectors $\varphi_{\X_0 + (n\lambda,\tilde{n}\tambda)}$ match the Schr\"odinger position eigenvectors $\ket{x}$, the discrete labels $n,\tilde{n} \in \Z$ match the continuous position variable $x \in \R$, and the parameters $r,\tilde{r} \in \R$ match the wave vector $k \in \R$.

In order to investigate the plane wave limit of the energy in \eqref{BeautifulEnergy}, consider setting $\omega = 0$, so that the second term in the Hamiltonian \eqref{HamRef} disappears, and taking the limit $\lambda \rightarrow 0$ while the ratio $\tilde{r} / \lambda$ is held constant. In this limit the energy becomes
\begin{align}
    E_{(\lambda,\tambda)}
    \rightarrow
    \frac{\pr{\tilde{r} \hbar / \lambda}^2}{2m}
    \;,
\end{align}
where $\tilde{r} \hbar / \lambda \in \R$ can be interpreted as the momentum and we obtain the well-known formula for kinetic energy.

\subsection{Rational scale ratios}

Next, we consider the case when both $\lambda/\ell$ and $\tambda/\tell$ are rational numbers, and once again under the assumption $\lambda \tambda \in 2\pi\hbar \, \Z$. For the sake of definiteness, let us write
\begin{align}
\label{EBOB}
N \lambda = M \ell
\qquad \text{and} \qquad
\tilde{N} \tambda = \tilde{M} \tell
\;,
\end{align}
for $N,\tilde{N},M,\tilde{M} \in \Z_+$ where the pairs $(N,M)$ and $(\tilde{N},\tilde{M})$ are coprime.

The key difference with the first case treated above is that the lattice $\LS_{\X_0}^{\Lambda,(\lambda,\tambda)} \subset T_\Lambda$ is now finite. Hence, the expansion of an element $\psi \in \mathcal{H}^{\Lambda, (\lambda,\tambda), \X_0}_\mathrm{MP}$ as in \eqref{PsiInA} contains only finitely many terms, and can be written as
\begin{align}
\label{PsiInA2}
\psi
&=	\sum_{n = 0}^{N-1}
	\sum_{\tilde{n} = 0}^{\tilde{N}-1}
	A_{n,\tilde{n}} \,
	\varphi_{\X_0 + (n\lambda,\tilde{n}\tambda)}
\;.
\end{align}
In this expression, the coefficients $A_{n,\tilde{n}}$ are defined a priori only for $0 \leq n \leq N-1$ and $0 \leq \tilde{n} \leq \tilde{N}-1$. This prevents us from freely rearranging the summation variables in order to factor out $\varphi$ when acting with the Hamiltonian operator.

A lengthy rewriting of the action of the Hamiltonian on the state \eqref{PsiInA2} shows however that we can consistently define the coefficients $A_{n,\tilde{n}}$ for all $n,\tilde{n} \in \Z$ by imposing
\endnote{For this lengthy rewriting, we act with the regularized Hamiltonian operator on the state \eqref{PsiInA2}, then rearrange and relabel the terms of the sum, and finally use the quasi-periodicity of $\varphi$. This gives
\begin{align}
    \hat{H}_{(\lambda,\tambda)} \, \psi &=
    \sum_{n = 0}^{N-1}
	\sum_{\tilde{n} = 0}^{\tilde{N}-1}
	A_{n,\tilde{n}} \, \bigg(
	\frac{\hbar^2}{2m\lambda^2}
	\pr{2 \varphi_{\X_0 + (n\lambda,\tilde{n}\tambda)}
	- \varphi_{\X_0 + ((n+1)\lambda,\tilde{n}\tambda)}
	- \varphi_{\X_0 + ((n-1)\lambda,\tilde{n}\tambda)} }
	\nonumber \\ &\hspace{3cm}
	+ \frac{m \omega^2 \hbar^2}{2 \tambda^2} \,\Big(
	2 \varphi_{\X_0 + (n\lambda,\tilde{n}\tambda)}
	- e^{i \tambda (x_0 + n \lambda) / \hbar} \,
	\varphi_{\X_0 + (n\lambda,(\tilde{n}+1)\tambda)}
	\nonumber \\ &\hspace{6.9cm}
	- e^{-i \tambda (x_0 + n \lambda) / \hbar} \,
	\varphi_{\X_0 + (n\lambda,(\tilde{n}-1)\tambda)}
	\Big) \bigg)
\displaybreak[0]\nonumber \\ &=
	\sum_{n = 0}^{N-1}
	\sum_{\tilde{n} = 0}^{\tilde{N}-1}
	A_{n,\tilde{n}} \pr{ \frac{\hbar^2}{m\lambda^2}
	+ \frac{m \omega^2 \hbar^2}{\tambda^2} }
	\varphi_{\X_0 + (n\lambda,\tilde{n}\tambda)}
	\nonumber \\ &\hspace{0.5cm}
    - \frac{\hbar^2}{2m\lambda^2} \pr{
    \sum_{n = 0}^{N-2}
	\sum_{\tilde{n} = 0}^{\tilde{N}-1}
	A_{n,\tilde{n}} \,
	\varphi_{\X_0 + ((n+1)\lambda,\tilde{n}\tambda)}
    + A_{N-1,\tilde{n}} \,
	\varphi_{\X_0 + (N\lambda,\tilde{n}\tambda)}
	}
	\nonumber \\ &\hspace{0.5cm}
    - \frac{\hbar^2}{2m\lambda^2} \pr{
    \sum_{n = 1}^{N-1}
	\sum_{\tilde{n} = 0}^{\tilde{N}-1}
	A_{n,\tilde{n}} \,
	\varphi_{\X_0 + ((n-1)\lambda,\tilde{n}\tambda)}
    + A_{0,\tilde{n}} \,
	\varphi_{\X_0 + (-\lambda,\tilde{n}\tambda)}
	}
	\nonumber \\ &\hspace{0.5cm}
	- \frac{m \omega^2 \hbar^2}{2\tambda^2} \pr{
	\sum_{n = 0}^{N-1}
	\sum_{\tilde{n} = 0}^{\tilde{N}-2}
	A_{n,\tilde{n}} \,
	e^{i \tambda (x_0 + n \lambda) / \hbar} \,
	\varphi_{\X_0 + (n\lambda,(\tilde{n}+1)\tambda)}
	+ A_{n,\tilde{N}-1} \,
	e^{i \tambda (x_0 + n \lambda) / \hbar} \,
	\varphi_{\X_0 + (n\lambda,\tilde{N}\tambda)}
	}
	\nonumber \\ &\hspace{0.5cm}
	- \frac{m \omega^2 \hbar^2}{2\tambda^2} \pr{
	\sum_{n = 0}^{N-1}
	\sum_{\tilde{n} = 1}^{\tilde{N}-1}
	A_{n,\tilde{n}} \,
	e^{-i \tambda (x_0 + n \lambda) / \hbar} \,
	\varphi_{\X_0 + (n\lambda,(\tilde{n}-1)\tambda)}
	+ A_{n,0} \,
	e^{-i \tambda (x_0 + n \lambda) / \hbar} \,
	\varphi_{\X_0 + (n\lambda,-\tambda)}
	}
\displaybreak[0]\nonumber \\ &=
	\sum_{n = 0}^{N-1}
	\sum_{\tilde{n} = 0}^{\tilde{N}-1}
	A_{n,\tilde{n}} \pr{ \frac{\hbar^2}{m\lambda^2}
	+ \frac{m \omega^2 \hbar^2}{\tambda^2} }
	\varphi_{\X_0 + (n\lambda,\tilde{n}\tambda)}
	\nonumber \\ &\hspace{0.5cm}
    - \frac{\hbar^2}{2m\lambda^2} \pr{
    \sum_{n = 1}^{N-1}
	\sum_{\tilde{n} = 0}^{\tilde{N}-1}
	A_{n-1,\tilde{n}} \,
	\varphi_{\X_0 + (n\lambda,\tilde{n}\tambda)}
    + A_{N-1,\tilde{n}} \,
	\varphi_{\X_0 + (0,\tilde{n}\tambda) + (M\ell,0)}
	}
	\nonumber \\ &\hspace{0.5cm}
    - \frac{\hbar^2}{2m\lambda^2} \pr{
    \sum_{n = 0}^{N-2}
	\sum_{\tilde{n} = 0}^{\tilde{N}-1}
	A_{n+1,\tilde{n}} \,
	\varphi_{\X_0 + (n\lambda,\tilde{n}\tambda)}
    + A_{0,\tilde{n}} \,
	\varphi_{\X_0 + ((N-1)\lambda,\tilde{n}\tambda) + (-M\ell,0)}
	}
	\nonumber \\ &\hspace{0.5cm}
	- \frac{m \omega^2 \hbar^2}{2\tambda^2} \pr{
	\sum_{n = 0}^{N-1}
	\sum_{\tilde{n} = 1}^{\tilde{N}-1}
	A_{n,\tilde{n}-1} \,
	e^{i \tambda (x_0 + n \lambda) / \hbar} \,
	\varphi_{\X_0 + (n\lambda,\tilde{n}\tambda)}
	+ A_{n,\tilde{N}-1} \,
	e^{i \tambda (x_0 + n \lambda) / \hbar} \,
	\varphi_{\X_0 + (n\lambda,0) + (0,\tilde{M}\tell)}
	}
	\nonumber \\ &\hspace{0.5cm}
	- \frac{m \omega^2 \hbar^2}{2\tambda^2} \pr{
	\sum_{n = 0}^{N-1}
	\sum_{\tilde{n} = 0}^{\tilde{N}-2}
	A_{n,\tilde{n}+1} \,
	e^{-i \tambda (x_0 + n \lambda) / \hbar} \,
	\varphi_{\X_0 + (n\lambda,\tilde{n}\tambda)}
	+ A_{n,0} \,
	e^{-i \tambda (x_0 + n \lambda) / \hbar} \,
	\varphi_{\X_0 + (n\lambda,(\tilde{N}-1)\tambda) + (0,-\tilde{M}\tell)}
	}
\displaybreak[0]\nonumber \\ &=
	\sum_{n = 0}^{N-1}
	\sum_{\tilde{n} = 0}^{\tilde{N}-1}
	A_{n,\tilde{n}} \pr{ \frac{\hbar^2}{m\lambda^2}
	+ \frac{m \omega^2 \hbar^2}{\tambda^2} }
	\varphi_{\X_0 + (n\lambda,\tilde{n}\tambda)}
	\nonumber \\ &\hspace{0.5cm}
    - \frac{\hbar^2}{2m\lambda^2} \pr{
    \sum_{n = 1}^{N-1}
	\sum_{\tilde{n} = 0}^{\tilde{N}-1}
	A_{n-1,\tilde{n}} \,
	\varphi_{\X_0 + (n\lambda,\tilde{n}\tambda)}
    + A_{N-1,\tilde{n}} \,
	e^{-i N \lambda (\tilde{x}_0 + \tilde{n}\tambda)/\hbar} \,
	\varphi_{\X_0 + (0,\tilde{n}\tambda)}
	}
	\nonumber \\ &\hspace{0.5cm}
    - \frac{\hbar^2}{2m\lambda^2} \pr{
    \sum_{n = 0}^{N-2}
	\sum_{\tilde{n} = 0}^{\tilde{N}-1}
	A_{n+1,\tilde{n}} \,
	\varphi_{\X_0 + (n\lambda,\tilde{n}\tambda)}
    + A_{0,\tilde{n}} \,
	e^{i N \lambda (\tilde{x}_0 + \tilde{n}\tambda)/\hbar} \,
	\varphi_{\X_0 + ((N-1)\lambda,\tilde{n}\tambda)}
	}
	\nonumber \\ &\hspace{0.5cm}
	- \frac{m \omega^2 \hbar^2}{2\tambda^2} \pr{
	\sum_{n = 0}^{N-1}
	\sum_{\tilde{n} = 1}^{\tilde{N}-1}
	A_{n,\tilde{n}-1} \,
	e^{i \tambda (x_0 + n \lambda) / \hbar} \,
	\varphi_{\X_0 + (n\lambda,\tilde{n}\tambda)}
	+ A_{n,\tilde{N}-1} \,
	e^{i \tambda (x_0 + n \lambda) / \hbar} \,
	\varphi_{\X_0 + (n\lambda,0)}
	}
	\nonumber \\ &\hspace{0.5cm}
	- \frac{m \omega^2 \hbar^2}{2\tambda^2} \pr{
	\sum_{n = 0}^{N-1}
	\sum_{\tilde{n} = 0}^{\tilde{N}-2}
	A_{n,\tilde{n}+1} \,
	e^{-i \tambda (x_0 + n \lambda) / \hbar} \,
	\varphi_{\X_0 + (n\lambda,\tilde{n}\tambda)}
	+ A_{n,0} \,
	e^{-i \tambda (x_0 + n \lambda) / \hbar} \,
	\varphi_{\X_0 + (n\lambda,(\tilde{N}-1)\tambda)}
	}.
\label{App-RatHamAct}
\end{align}
With this rewriting of the action of the Hamiltonian, we can see that the coefficients $A_{n,\tilde{n}}$ can be defined to satisfy
\begin{subequations}
\begin{align}
    A_{-1,\tilde{n}} &\equiv
    A_{N-1,\tilde{n}} \,
	e^{-i N \lambda (\tilde{x}_0 + \tilde{n}\tambda)/\hbar},
	\\
    A_{N,\tilde{n}} &\equiv
    A_{0,\tilde{n}} \,
	e^{i N \lambda (\tilde{x}_0 + \tilde{n}\tambda)/\hbar},
	\\
    A_{n,-1} &\equiv A_{n,\tilde{N}-1},
    \\
    A_{n,\tilde{N}} &\equiv A_{n,0},
\end{align}
which can be written more compactly as the general definition \eqref{DefCoef}. Then, \eqref{App-RatHamAct} leads to \eqref{RatHamAct}.
\end{subequations}
}
\begin{align}
\label{DefCoef}
    A_{n+N,\tilde{n}} \equiv
    e^{i N \lambda \, \tilde{x}_0 /\hbar} \,
    A_{n,\tilde{n}}
    \qquad \text{and} \qquad
    A_{n,\tilde{n}+\tilde{N}} \equiv
    A_{n,\tilde{n}}
    \;.
\end{align}
This implies in particular that
\begin{align}
A_{n+N,\tilde{n}} \,
	\varphi_{\X_0 + ((n+N)\lambda,\tilde{n}\tambda)}
=	A_{n,\tilde{n}} \,
	\varphi_{\X_0 + (n\lambda,\tilde{n}\tambda)}
\quad\text{and}\quad
A_{n,\tilde{n}+\tilde{N}} \,
	\varphi_{\X_0 + (n\lambda,(\tilde{n}+\tilde{N})\tambda)}
=	A_{n,\tilde{n}} \,
	\varphi_{\X_0 + (n\lambda,\tilde{n}\tambda)}
\;.
\end{align}
With this definition, we find
\begin{align}
\label{RatHamAct}
    \hat{H}_{(\lambda,\tambda)} \, \psi &=
	\sum_{n = 0}^{N-1}
	\sum_{\tilde{n} = 0}^{\tilde{N}-1} \bigg(
		\frac{\hbar^2}{2m\lambda^2}
		\pr{2 A_{n,\tilde{n}}
			- A_{n-1,\tilde{n}} - A_{n+1,\tilde{n}}
			}
	\nonumber \\ & \hspace{2.2cm}
		+ \frac{m \omega^2 \hbar^2}{2\tambda^2}
		\pr{2 A_{n,\tilde{n}}
			- e^{i \tambda (x_0 + n \lambda) / \hbar} \,
			A_{n,\tilde{n}-1}
			- e^{-i \tambda (x_0 + n \lambda) / \hbar} \,
			A_{n,\tilde{n}+1}
			}
		\!\! \bigg)
		\, \varphi_{\X_0 + (n\lambda,\tilde{n}\tambda)}\;.
\end{align}
For the eigenstates $\psi$ of the Hamiltonian $\hat{H}_{(\lambda,\tambda)}$, we obtain the relation
\begin{align}
\label{RecursionEqA-Rat}
E_{(\lambda,\tambda)} \, A_{n,\tilde{n}}
&=	\frac{\hbar^2}{2m\lambda^2}
	\pr{2 A_{n,\tilde{n}}
		- A_{n-1,\tilde{n}} - A_{n+1,\tilde{n}}
		}
	\nonumber \\ & \hspace{0.5cm}
	+ \frac{m \omega^2 \hbar^2}{2\tambda^2}
	\pr{2 A_{n,\tilde{n}}
		- e^{i \tambda x_0 / \hbar} \,
		A_{n,\tilde{n}-1}
		- e^{-i \tambda x_0 / \hbar} \,
		A_{n,\tilde{n}+1}
		}
\;.
\end{align}
 This equation is identical to the recurrence relation \eqref{RecursionEqA} in the previous case (using the assumption \eqref{AssumptionLambda}), but its solutions are also constrained by \eqref{DefCoef}. If we consider the solutions of the form
\begin{align}
\label{Rat-Ansatz}
    A_{n,\tilde{n}} = e^{i (n \tilde{r} - \tilde{n} r)}
\end{align}
for $r,\tilde{r} \in \R$ as before, we find the restrictions
\begin{align}
\label{RestrictR}
    \frac{\tilde{N} r}{2\pi} \in \Z
    \qquad \text{and} \qquad
    \frac{N\tilde{r}}{2\pi}
    - \frac{N\lambda\tilde{x}_0}{2\pi\hbar} \in \Z
    \;.
\end{align}
Since these solutions are invariant under shifting the parameters $r, \tilde{r}$ by a multiple of $2\pi$, we find exactly $N \times \tilde{N}$ different solutions of the form \eqref{Rat-Ansatz}. Hence, we can write
\begin{align}
    r = \frac{2\pi k}{\tilde{N}}
    \qquad \text{and} \qquad
    \tilde{r} =
    \frac{2\pi\tilde{k}}{N}
    + \frac{\lambda\tilde{x}_0}{\hbar}
\end{align}
for $k = 1,...,\tilde{N}$ and $\tilde{k} = 1,...,N$. We obtain the energy eigenstates
\begin{align}
\label{SolHam-Rat-PlaneWave-New}
    \psi =
    \sum_{n = 0}^{N-1}
	\sum_{\tilde{n} = 0}^{\tilde{N}-1}
    e^{2\pi i (\tilde{k} n / N - k \tilde{n} / \tilde{N})} \,
	e^{i n \lambda\tilde{x}_0 / \hbar} \,
    \varphi_{\X_0+(n\lambda,\tilde{n}\tambda)}
\end{align}
with the corresponding energy eigenvalues
\begin{align}
\label{BeautifulEnergy2}
    E_{(\lambda,\tambda)}
&=	\frac{\hbar^2}{m \lambda^2}
	\pr{1 - \cos \bigg[ 2\pi \bigg(
	\frac{\tilde{k}}{N} + \frac{\tilde{x}_0}{\tambda}
	\bigg) \bigg] }
	+ \frac{m \omega^2 \hbar^2}{\tambda^2}
	\pr{1 - \cos \bigg[ 2\pi \bigg(
	\frac{k}{\tilde{N}} + \frac{x_0}{\lambda}
	\bigg)\bigg] }
\;.
\end{align}
The spectrum of the Hamiltonian operator in the superselection sector $\mathcal{H}^{\Lambda, (\lambda,\tambda), \X_0}_\mathrm{MP}$ consists of these $N \times \tilde{N}$ discrete values.
\endnote{Since the shift of the parameter $\X_0$ by a lattice vector maps the solutions we found to the same superselection sector of the Hilbert space, one can question whether we correctly identified all possible solutions under the given assumptions, or whether these solutions should have $M \times \tilde{M}$ different copies. Let's name our solutions as
\begin{align}
\psi_{\X_0,k,\tilde{k}} &\equiv
	\sum_{n = 0}^{N-1}
	\sum_{\tilde{n} = 0}^{\tilde{N}-1}
    e^{2\pi i (\tilde{k} n / N - k \tilde{n} / \tilde{N})} \,
    e^{i n \lambda \tilde{x}_0 / \hbar} \,
	\varphi_{\X_0 + (n\lambda,\tilde{n}\tambda)}
\;.
\end{align}
Precisely, the question we investigate here is whether the sets
\begin{align*}
    \cur{ \psi_{\X_0,k,\tilde{k}} \,\big\vert\,
    \tilde{k} = 1,...,N \;;\; k = 1,...,\tilde{N} }
    \quad \mathrm{and} \quad
    \cur{ \psi_{\X_0+\mathbb{M},k,\tilde{k}} \,\big\vert\,
    \tilde{k} = 1,...,N \;;\; k = 1,...,\tilde{N} }
\end{align*}
are linearly dependent for arbitrary values of $\mathbb{M} \in \Lambda$? The answer is positive as one finds
\begin{align}
    \psi_{\X_0 + (m\ell , \tilde{m}\tell) , k , \tilde{k}} &=
    e^{-i m \ell \tilde{x}_0 / \hbar} \,
    \psi_{\X_0 , k + m \tilde{M} , \tilde{k} + \tilde{m} M}
    \;,
\end{align}
for any $m,\tilde{m} \in \Z$. Hence, shifting the anchoring point on the modular space by a lattice vector can be interpreted as a permutation of the solutions up to a phase. This confirms that the $N \times \tilde{N}$ solutions we found are complete.}

If one chooses the coarse-graining scale for the Hamiltonian to be equal to the MP scale, i.e.~$\lambda = \ell$ and $\tambda = \tell$, then there is a single energy eigenstate in each superselection sector that is given by $\psi = e^{i n \lambda \tilde{x}_0 / \hbar} \, \varphi_{\X_0}^{\phantom{O}}$ with the energy
\begin{align}
    E_{(\ell,\tell)} =
    \frac{2\hbar^2}{m \ell^2} \, \sin^2 \!\bigg(
        \frac{\pi \tilde{x}_0}{\tell} \bigg)
    + \frac{2m \omega^2 \hbar^2}{\tell^2} \, \sin^2 \!\bigg( \frac{\pi x_0}{\ell} \bigg)
    \;.
\end{align}

There are two important differences between our results here (when $\lambda/\ell$ and $\tambda/\tell$ are rational numbers) and those in the previous subsection (when the ratios are irrational numbers). Firstly, we find here only a finite number of distinct elements in the energy spectrum. Secondly, the plane-wave-like solutions \eqref{SolHam-Rat-PlaneWave-New} are normalizable, unlike \eqref{SolHam-Irr-PlaneWave} in the previous case. The existence of normalizable plane waves is a new feature of the MP representation, which has no analog in the Schr\"odinger or polymer representations.

\subsection{Discussion}

Now that we have applied the MP representation to study the harmonic oscillator, a few comments and remarks are in order. First, one should recall that we have studied the case $\lambda \tambda \in 2\pi\hbar \, \Z$, which was motivated by our power-law ansatz \eqref{SolHam-Irr-PowerLaw}. Within this assumption, we have found that the spectrum of the harmonic oscillator is bounded from below and above, and can be either continuous or discrete depending on the nature of the scale ratios. This can be contrasted with the study of the harmonic oscillator in the polymer representation \cite{Ashtekar:2002sn,G.:2013lia}. In position polarization, the $p$-polymer representation leads to a difference equation for the action of the regularized Hamiltonian on a state. This can be mapped in the momentum polarization to a differential equation (in the position variable) known as the Mathieu equation, whose solutions can be analyzed to find a discrete spectrum. As pointed out in \cite{G.:2013lia}, studying the spectrum of the harmonic oscillator in the polymer representation can be mapped to the study of a periodic potential in the Schr\"odinger representation.

An important difference between the polymer and MP representations is the following. In the polymer representation, there exists a Fourier transform mapping the (say) $p$-polymer representation in position polarization to that in momentum representation. This is precisely the transformation turning the difference equation into the Mathieu equation. In the MP representation however, since neither $q$ nor $p$ exist as operators, and since there is only a single polarization, namely the modular one, it is not possible to turn the difference equation obtained by acting with the Hamiltonian on a state into a differential equation. As we have seen however, it is possible to write a transformation \eqref{PolymericFourier-Irr} which enables to find the energy eigenvalues without solving for the eigenstate. This situation is similar to what happens for a Schr\"odinger free particle, where in the analogue of \eqref{Fake_PDE} the state also drops out. In future work we will push the analogy between the MP representation and the polymer one further, in particular by defining the MP equivalent of the space Cyl of cylindrical functions and its algebraic dual Cyl$^\star$, together with the action of the latter on the former. While we have not needed this structure for our study, it could prove important for studying the case $\lambda \tambda \notin 2\pi \hbar \, \Z$ and systems beyond the simple harmonic oscillator.

\section{Perspectives}
\label{sec:6}

In this paper we have constructed a new family of representations of the Weyl algebra of exponentiated position and momentum operators. To achieve this, we have first recalled in section \ref{sec:2} how the Schr\"odinger, polymer, and modular representations are constructed, and then proceeded to a polymerization of the modular representation in section \ref{sec:3}. Our new family of representations therefore relies on the same ingredients as the polymer and modular representations, namely the existence of a length scale used to unify position and momentum in the form of the modular lattices, and then the so-called irregularity condition, which consists in relaxing the weak continuity of the Weyl operators. By doing so, we have obtained a family of modular polymer representations labelled by a choice of modular lattice, and then proved the equivalence of representations within the modular polymer family in section \ref{sec:4}. Finally in section \ref{sec:5} we have applied this modular polymer representation to the study of the harmonic oscillator. There, we have seen how the energy spectrum depends on whether the ratios $\lambda/\ell$ and $\tambda/\tell$ are rational or irrational, where $\lambda$ is the scale used to construct the exponentiated Weyl operators, $\ell$ the scale used to define the modular lattice, and the tilded scales are their duals. More precisely, we have studied the case in which these rations are either both irrational or rational. In the first case the spectrum is continuous and bounded, while in the second case it is discrete. This is reminiscent of studies of non-separable Hilbert spaces, where the spectra of e.g. the translation (or exponentiated momentum) operator can be continuous or discrete depending on whether the translation parameter is rational or irrational \cite{Dittrich:2015vfa,Bahr:2015bra}.

As explained in the introduction, our motivation for constructing and studying these modular polymer representations in a finite-dimensional setting was to investigate how inequivalent alternatives to the Schr\"odinger representation can be constructed based on physical inputs from possible theories of quantum gravity and quantum spacetime. On the one hand, LQG makes a strong case for the study of polymer representations, as demonstrated very simply and efficiently by LQC \cite{Ashtekar:2003hd,Brunnemann:2007du,Ashtekar_2011}, while on the other hand work such as \cite{Freidel:2016aa} has suggested that the modular representation is more fundamental than the Schr\"odinger one.

With this in mind, and with the modular polymer representation now at our disposal, the next important step will be to consider possible quantum gravitational models (i.e. different from the simple harmonic oscillator) in which the modular polymer representations could serve to describe new physics. Just like the polymer representation is enough to drastically modify the structure of Wheeler--de Witt quantum cosmology (i.e. quantum cosmology based on the Schr\"odinger representation) and turn it into LQC with the avoidance of the Big Bang singularity, one can naturally expect new physical consequences from the application of the modular polymer representation to quantum cosmology.

Another reason for studying the modular polymer representation is actually grounded in LQG itself. Indeed, the spin network representation mentioned in the introduction, together with its Hilbert space structure which has then inspired LQC, is based on the so-called Ashtekar--Lewandowski representation of LQG \cite{Thiemann:2001yy}. This representation comes with a type of Stone--von Neumann uniqueness theorem, known as the LOST theorem \cite{Lewandowski_2006,Fleischhack_2008} based on the representation of holonomy and flux operators. However, just like the Stone--von Neumann uniqueness theorem of quantum mechanics for the Schr\"odinger representation can be bypassed to construct the polymer representations (as we have recalled in section \ref{sec:2}), the LOST theorem can also be bypassed to construct an inequivalent representation of LQG \cite{Bahr:2015bra,Dittrich:2014wpa,Dittrich:2014wda}. This is known as the BF representation, and its essential distinguishing feature is that it only allows for exponentiated flux operators, and not for the existence of the fluxes themselves as fundamental operators (exponentiated fluxes were also studied in the classical setup in \cite{Rovelli:2015fwa}). Studying the dynamics of LQG in this representation is still an ongoing and involved task, and it would therefore be interesting to try to apply this representation to the study of symmetry-reduced models. Like in LQC, where one takes inspiration from the Ashtekar--Lewandowski representation of LQG to build a quantum theory inequivalent to Wheeler-de Witt quantum cosmology, one could take the ingredients of the BF representation and transpose them to the quantum cosmological setting. This would require to work with exponentiated flux operators, and also with the so-called gauge-covariant fluxes. Interestingly, the use of the gauge-covariant fluxes in the usual representation of LQG has already been investigated in \cite{Liegener:2019zgw,Liegener:2019ymd,Liegener:2019dzj}. Finally, studying the modular polymer representation will also be the occasion of applying and investigating the role of the Zak transform in quantum cosmology. We plan to come back to this in future work.

\section*{Acknowledgements}

We would like to thank Sylvain Carrozza, Bianca Dittrich, Laurent Freidel, Francesco Sartini, and Lee Smolin for discussions and comments. Research at Perimeter Institute is supported in part by the Government of Canada through the Department of Innovation, Science and Economic Development Canada, and by the Province of Ontario through the Ministry of Economic Development, Job Creation and Trade. This research was also partly supported by grants from John Templeton Foundation and FQXi.

\appendix

\section{Notes and proofs}

In this appendix we gather the various notes and proofs from the main text. Clicking on the link in the left margin brings back to the corresponding location in the main text.

\vspace{-1.5cm}

\printendnotes[custom]

\newpage

\bibliography{Biblio.bib}
\bibliographystyle{Biblio}

\end{document}